\newcommand{\INPUT}{\item[\textbf{Input:}]}
\newcommand{\INI}{\item[\textbf{Initialization:}]}
\newcommand{\FUN}{\item[\textbf{Function:}]}
\newcommand{\OUTPUT}{\item[\textbf{Output:}]}
\newtheorem{assumption}{Assumption}
\newtheorem{theorem}{Theorem}
\newtheorem{lemma}{Lemma}
\begin{document}


\title{Contextual Combinatorial Beam Management via Online Probing for Multiple Access mmWave Wireless Networks}

\author{
		\IEEEauthorblockN{Zhizhen Li,
            Xuanhao Luo,
            Mingzhe Chen,~\IEEEmembership{Member,~IEEE,}
		Chenhan Xu,~\IEEEmembership{Member,~IEEE,} \\
		Shiwen Mao,~\IEEEmembership{Fellow,~IEEE,} 
            Yuchen Liu,~\IEEEmembership{Member,~IEEE}\\
 }
\thanks{Z. Li, X. Luo, C. Xu, and Y. Liu are with the Department of Computer Science, North Carolina State University, USA (Email: \{zli92, xluo26, chenhanxu, yuchen.liu\}@ncsu.edu). \textit{(Corresponding author: Yuchen Liu.)}}
\thanks{M. Chen is with the Department of Electrical and Computer Engineering and Frost Institute for Data Science and Computing, University of Miami, Coral Gables, FL 33146 USA (Email: \protect\url{ mingzhe.chen@miami.edu)}.} 
\thanks{S. Mao is with the Department of Electrical and Computer Engineering, Auburn University, Auburn,
AL, 36849-5201, USA (Email: smao@ieee.org).}
\vspace{-1.5em}
}



\maketitle

\begin{abstract}
Due to the exponential increase in wireless devices and a diversification of network services, unprecedented challenges, such as managing heterogeneous data traffic and massive access demands, have arisen in next-generation wireless networks. To address these challenges, there is a pressing need for the evolution of multiple access schemes with advanced transceivers. Millimeter-wave (mmWave) communication emerges as a promising solution by offering substantial bandwidth and accommodating massive connectivities. Nevertheless, the inherent signaling directionality and susceptibility to blockages pose significant challenges for deploying multiple transceivers with narrow antenna beams. Consequently, beam management becomes imperative for practical network implementations to identify and track the optimal transceiver beam pairs, ensuring maximum received power and maintaining high-quality access service.
In this context, we propose a Contextual Combinatorial Beam Management (CCBM) framework tailored for mmWave wireless networks. By leveraging advanced online probing techniques and integrating predicted contextual information, such as dynamic link qualities in spatial-temporal domain, CCBM aims to jointly optimize transceiver pairing and beam selection while balancing the network load. This approach not only facilitates multiple access effectively but also enhances bandwidth utilization and reduces computational overheads for real-time applications. Theoretical analysis establishes the asymptotically optimality of the proposed approach, complemented by extensive evaluation results showcasing the superiority of our framework over other state-of-the-art schemes in multiple dimensions.
\end{abstract}

\begin{IEEEkeywords}
Beam management, mmWave, transceiver pairing, wireless networks, multi-armed bandit, contextual awareness.
\end{IEEEkeywords}

\section{Introduction}

There has been an explosive rise in both the number of mobile users as well as the proliferation of mobile device over the past decade. Such exponential growth is expected to continue due to the rapid evolution of the Internet of Everything (IoE) and innovative mobile applications. 
Emerging technologies such as extended reality (XR)~\cite{ratcliffe2021extended}, holographic video \cite{blinder2019signal}, and the Internet of Vehicles (IoV)~\cite{luo2023clothoid} require high data rates, minimal latency, and support for massive user access, creating unprecedented demands on wireless networks.
These explosive requirements will overwhelm the connection capabilities of the existing fourth generation (4G) and fifth generation (5G) cellular network systems, necessitating the development of next-generation multiple access (NGMA) and advanced transceivers. The primary objective of next-generation (nextG) wireless networks, notably sixth generation (6G), is to revolutionize the network infrastructure, facilitating a vast multitude of users and devices to connect with unparalleled efficiency and flexibility over high-frequency spectrum of radio resources\cite{9693417, jian2020quantitative, gu2024fendi}. This evolution promises to unlock new dimensions of connectivity, seamlessly integrating advanced applications and radio transceivers into our daily lives. 

As a game-changer, millimeter-wave (mmWave) communication stands out in next-generation wireless systems, offering high-bandwidth, low-latency connectivity to address the increasing demands of densely deployed devices and bandwidth-intensive applications, particularly in wireless local-area networks (WLANs). In this domain, beamforming emerges as an advanced transceiver technology due to its capacity to optimize signal strength and reliability in high-frequency, directional radio environments. Unlike traditional omni-directional antennas, which struggle with higher attenuation and interference in mmWave bands, beamforming focuses transmission and reception signals into narrow beams, effectively enhancing signal strength and reducing the interference footprints. Moreover, beamforming enables adaptive beam steering among transceivers, facilitating dynamic communication links and optimizing spectrum utilization, thus ensuring multiple access in mmWave systems.

In mmWave network scenarios, the deployment of multiple transceivers, such as access points (APs), in densely populated WLAN environments is commonplace to meet the demands of bandwidth-intensive applications. Beam management, therefore, becomes imperative for mmWave transceiver implementations to identify and track the optimal transceiver beam pairs, ensuring maximum received power and maintaining high-quality service. However, managing directional beams across numerous transceivers introduces significant challenges in such high path-loss and blockage-prone contexts. First, the task entails formidable overhead, escalating linearly with the number of communication entities. Second, the paired beams are highly susceptible to both static and dynamic obstacles due to the limited range and poor penetration capabilities of mmWave signals. Lastly, the uncertain and time-varying nature of the mmWave channel complicates the adaptive beamforming process, especially in response to changes in line-of-sight (LoS) and non-line-of-sight (NLoS) conditions. This complexity is particularly pronounced in load balancing scenarios where the objective is to maintain consistent service levels across multiple access user equipment.

Several solutions have been developed for beam management and resource allocation for multiple access\cite{10228988,ding2023joint}. However, a key assumption in these prior works is that the channel condition is known from the beginning, which poses a challenge in densely deployed mmWave scenarios characterized by their time-varying nature. Recently, various machine learning-based approaches have emerged to address the uncertainty in beam alignment and selection among transceivers~\cite{zhang2021learning}. For instance, in \cite{polese2021deepbeam}, a deep learning framework is proposed to predict link quality between beams, though its implementation on network devices requires substantial computational resources. 
Alternatively, a multi-armed bandit (MAB) based online learning framework appears more suitable, as it negates the need for offline data collection and strikes a balance between exploration and exploitation in uncertain environments. Such online algorithms can adapt in real-time to changing network conditions, allowing for quick adjustments based on evolving channel characteristics and user demands. In essence, the adaptability is crucial in scenarios where the environment is highly dynamic, such as in mmWave networks where obstacles and interference levels may vary rapidly. 
However, a comprehensive study on a joint transceiver paring and beam management scheme with load balancing in an obstacle-rich mmWave network is still lacking.
Direct application of MAB algorithms like Upper Confidence Bound (UCB) \cite{chen2018contextual,garivier2011upper} may not fully exploit the characteristics of this problem's underlying model, as it overlooks the correlations between nearby beams of transceivers. Given that nearby beams exhibit high spatial correlation, their signaling characteristics such as established link quality are also similar. Therefore, sampling one beam can provide information about its neighboring beams, potentially expediting the convergence to optimal configuration. 

In our prior works\cite{liu2022environment, zzLi}, we introduced a regression-based machine learning framework to predict link quality between mmWave transceivers, accounting for both static and dynamic blockages. This framework has demonstrated an impressive accuracy rate of up to 94\%, requiring minimal environmental data as input. Notably, it seamlessly adapts to different network scenarios by merely modifying input data, eliminating the need for additional model training. By utilizing such link quality predictions as contextual knowledge, the overhead produced in AP probing and beamforming processes can be greatly reduced. Intuitively, APs offering high signal strength at specific locations can be selected for optimal beam pairing. This prior research forms the basis for our current beam management investigation, enabling a context-aware online probing technique tailored to coordination-minimal wireless transceivers.

In this paper, we present a novel contextual combinatorial beam management (CCBM) framework designed to tackle the joint AP and beam selection problem in mmWave wireless networks, ensuring a balanced load distribution among dense transceivers for consistent user services with minimum coordination overhead. In CCBM, each beam is treated as an arm, with the received power serving as reward for selecting specific beams. The objective is to sequentially choose these arms to maximize the cumulative rewards within a given time horizon, particularly allowing user devices to explore \textit{multiple} arms and evaluate their rewards before finalizing the AP-beam selection. This approach significantly minimizes uncertainty by revealing arm rewards before the decision-making process, while minimizing the coordination overhead between transceivers. Additionally, by leveraging link quality predictions of unknown beam directions from our prior works~\cite{liu2022environment, zzLi}, the CCBM framework prioritizes APs based on their predicted link quality at the receiver location. Only beams associated with higher predicted values from these APs are considered during the online probing process. This strategy expedites the assessment of network conditions by avoiding unnecessary searches among irrelevant candidate beams, as both the environmental context and arm context implicitly contribute to the rapid identification of optimal beams. The main contributions of this work\footnote{A part of this paper was presented at the IEEE International Conference on Communications\cite{li2024context}.} are summarized as follows.
\begin{itemize}
    \item We innovatively frame the joint transceiver paring and beam management task as a contextual combinatorial MAB problem, naturally leveraging the correlation between nearby beams and location-aware link qualities as context information to expedite the beam management procedure. 

    \item In our proposed CCBM framework, we incorporate a novel attention-based selection scheme along with an early stopping criterion to prevent excessive exploration during the online probing process. Extensive theoretical analysis establishes an upper bound on the cumulative regret,  i.e., the gap to the results obtained from an oracle search, which demonstrates the asymptotic optimality of our beam management approach.

    \item We develop a reward function within the MAB algorithm that explicitly considers load balancing among candidate APs, guiding receivers in the online probing process to select a globally optimal transmitter and corresponding beam for pairing, thereby optimizing both the communication efficiency and overall network performance.
 
    \item Comprehensive evaluations demonstrate the superiority of our CCBM framework over baseline approaches in various dimensions, including lower regret, increased user throughput, and improved load balancing across densely deployed mmWave APs.

\end{itemize}


\section{Related Works}

Numerous studies have been undertaken to tackle the unprecedented challenges posed by the next-generation multiple access networks, arising from heterogeneous data traffic, massive connectivity demands, and the necessity for ultra-high bandwidth efficiency coupled with ultra-low latency requirements~\cite{9806417}\cite{10278912}. Among these approaches, Non-Orthogonal Multiple Access (NOMA) has emerged as a promising solution~\cite{liu2022developing}, aimed at surmounting the constraints of traditional orthogonal multiple access schemes, and allowing multiple transceivers to share the same resource block through power domain multiplexing, which allocates different power levels to encode data signals for various receivers.  For instance, in \cite{10077113}, a novel cluster-free NOMA framework is proposed for providing scenario-adaptive NOMA communications, utilizing distributed machine learning algorithms for efficient implementation in both single-cell and multi-cell networks. Additionally, in \cite{10278912}, a resource allocation algorithm is developed based on the trust region policy optimization (TRPO) algorithm, addressing the long-term power-constrained sum rate maximization problem in NGMA. Moreover, in \cite{10226210}, a deep neural network (DNN)-based unsupervised learning algorithm is introduced to improve the sum rate performance while maintaining a minimum data-rate requirement in power-domain NOMA setups.

In mmWave networks, beam management plays a crucial role for implementing multiple access techniques like NOMA, as it enables dedicated resource allocation and interference management~\cite{wang2009beam}. Through dynamic adjustments of beam directions and strengths, beam management optimizes transmission quality for multiple access users, thereby maximizing spectral efficiency and system capacity. 
Specifically, a rapid-discovery approach utilizing multi-resolution beam search was introduced in~\cite{wang2009beam}. This technique initially explores a broad beam, progressively refining to narrower beams of a transceiver to determine the optimal beam. Although this method is viable, it requires the adjustment of beam resolution at each stage. {In} \cite{8387219}, a potential map of THz vehicle transmission is developed for autonomous vehicles to address the blockage of short-range and unstable links. In \cite{qi2016coordinated}, the researchers implemented initial access within clustered mmWave small cells by employing the power delay profile. Base stations are arranged into clusters and linked with a backhaul network. They exchange their measurement reports derived from mobile devices and leverage these shared measurements to estimate mobile device locations. As a result, base stations can direct signals towards the estimated locations of transceivers in LoS scenarios. 

Recently, motivated by the remarkable advancements made by deep learning in computer vision and natural language processing, researchers have turned their attention to applying deep learning techniques in the realm of network resource allocation \cite{9044808, ma2023intelligent} and beam management~\cite{liu2020deep,echigo2021deep,qi2020deep}. In \cite{8542687}, a deep learning-based beam management and interference coordination (BM-IC) method is proposed for dense mmWave networks. {In} \cite{9044808}, {a deep Q learning network model is utilized to solve resource allocation problem in 5G architecture. Several DNN models are developed in} \cite{9417452, 9204436} {to predict the best serving beams and facilitate beam training in multiple-input multiple-output (MIMO) systems}. However, a significant drawback of using these deep learning models to optimize beam management is their reliance on large volumes of training data. Additionally, when environmental conditions change, the pre-trained models may become ineffective.  Conversely, online learning approaches offer an adaptive solution by continuously adapting to new incoming or observed data, thus maintaining effectiveness even when the operational environment undergoes significant changes~\cite{qiao2023intelligent}. As an efficient online learning scheme, the multi-armed bandit (MAB) framework has been extensively applied to solve various online optimization problems in the networking area such as channel selection~\cite{6914537,xiao2023resource}, and mobility management\cite{shen2016non}. For example, in the classic MAB model with the UCB algorithm~\cite{auer2002finite}, one arm is played in every round and its corresponding reward is revealed immediately. Then, the objective is to maximize the total expected reward accumulated during $T$ rounds, which can be equivalently formulated as to minimize the regret, i.e., the difference between the reward of the optimal selection and the reward from the algorithm.   

Due to its adaptivity in dynamic scenarios, several MAB algorithms have been developed for beam management in mmWave networks~\cite{va2019online,aykin2020mamba}. For instance, \cite{8485876} adopted a contextual MAB approach to address the beam selection problem in mmWave vehicular systems, leveraging coarse user location information and received data aggregation for environment adaptation. In \cite{8842625}, the beam alignment problem was formulated as a stochastic multi-armed bandit problem, utilizing correlation structure among transceiver beams such that the information from nearby beams is extracted to identify the optimal one. 
Similarly, \cite{hashemi2017efficient} considered beam correlation as arm context with a unimodal beamforming algorithm. However, these existing approaches did not account for practical mmWave network scenarios with numerous obstacles, nor did they jointly consider load-balanced resource allocation in their designs. Notably, \cite{10228988} addressed resource allocation partially with a coarse-level AP probing algorithm, extending the contextual bandit learning framework to handle unknown link rate distributions. However, it did not manage beam pairing for each mmWave transceiver. This work aims to bridge these gaps by jointly considering transceiver paring and beam management while adaptively balancing the network load.

\section{Preliminaries of the Contextual Knowledge Prediction Model}


Given the significance of contextual knowledge in facilitating beam management among mmWave transceivers in a wireless network, our CCBM mechanism is built upon a prediction model that dynamically constructs a comprehensive link quality map of the network environment, considering both spatial and temporal domains.
To ensure precise prediction, our contextual knowledge prediction model comprises two components: long-term and short-term link quality prediction. The long-term prediction aims to forecast a link quality map under static environmental conditions, while the short-term prediction captures changes in link quality caused by environmental dynamics such as moving obstacles.

\subsection{Long-term prediction}
The long-term prediction model is segmented into LoS and NLoS prediction components. This division is crucial due to the significant variation in link quality between these scenarios, particularly pronounced in mmWave bands compared to lower frequencies. Built upon our prior research~\cite{liu2022environment}, the first step involves identifying LoS and NLoS areas within the region. To achieve this, we employ geometric analysis based on~\cite{liu2021maximizing} to delineate these coverage areas. The process basically begins by partitioning the entire network space into a grid of equal-sized 3D cells. Subsequently, each grid cell is scrutinized to determine if the center of the grid falls within a shadowing polygon formed by APs and existing obstacles. In this way, grids devoid of such shadowing polygons are classified as LoS, while grids intersected by shadowing polygons are categorized as NLoS. 

{As widely acknowledged, LoS path component contributes to the majority of link quality at mmWave frequencies. Therefore, the link performance under these scenarios is not highly dependent on the surrounding obstacles, but instead, depends more on the distance between transceivers. As such, LoS link-quality predictions can be performed based on a 3GPP mmWave channel model with parameters chosen for LoS scenarios}~\cite{3GPP16}. 

Forecasting link quality in NLoS areas presents a greater challenge due to the intricate interplay of signal reflection and diffraction effects, which complicate predictions based solely on channel models. Consequently, leveraging deep neural networks (DNNs) becomes a solution to address this complexity and accurately predict link quality between transceivers when an LoS path does not exist.
As introduced in~\cite{liu2022environment}, the proposed DNN model incorporates basic environmental features as input, formulating the prediction task as a regression-based problem. The environmental characteristics encompass various factors, including scenario configuration, obstacle sizes and positions, material reflectivity information, and the locations of APs. By integrating this comprehensive set of static features, the model can effectively capture the nuances of NLoS environments and make precise link quality predictions as verified in our prior work~\cite{liu2022environment}. {Specifically, the input feature is firstly flattened into a vector of size $n_{in}$, which is then fed into a fully connected network with 4 hidden layers, resulting in a signal-to-noise ratio (SNR) value as the output. The model is trained through the backpropagation rule using a mean-squared error loss function.} In essence, this approach harnesses the power of machine learning to navigate the intricacies of multi-path effects between transceivers with paired beams, enabling more robust and reliable forecasts compared to traditional channel modeling techniques alone.

\subsection{Short-term prediction}
Although the long-term link quality map contains information about radio signal strength at various locations relative to the deployed transceivers \cite{luo2024rm}, the prediction model fails to adequately consider the impact of environment dynamics caused by moving obstacles. Because of the mmWave link's susceptibility to blockages, even a slight variation in the location of a moving object can lead to significant changes in the link quality map. This, in turn, affects the beam pairing process among the transceivers. To solve the problem, we further apply a spatial-temporal augmentation model~\cite{zzLi} that is able to predict the \textit{up-to-date} link quality map under dynamic environments. Fig.~\ref{fig:your_label} depicted the overall framework of the short-term prediction model.

\begin{figure}[htbp]
    \centering
    \includegraphics[width=0.9\columnwidth]{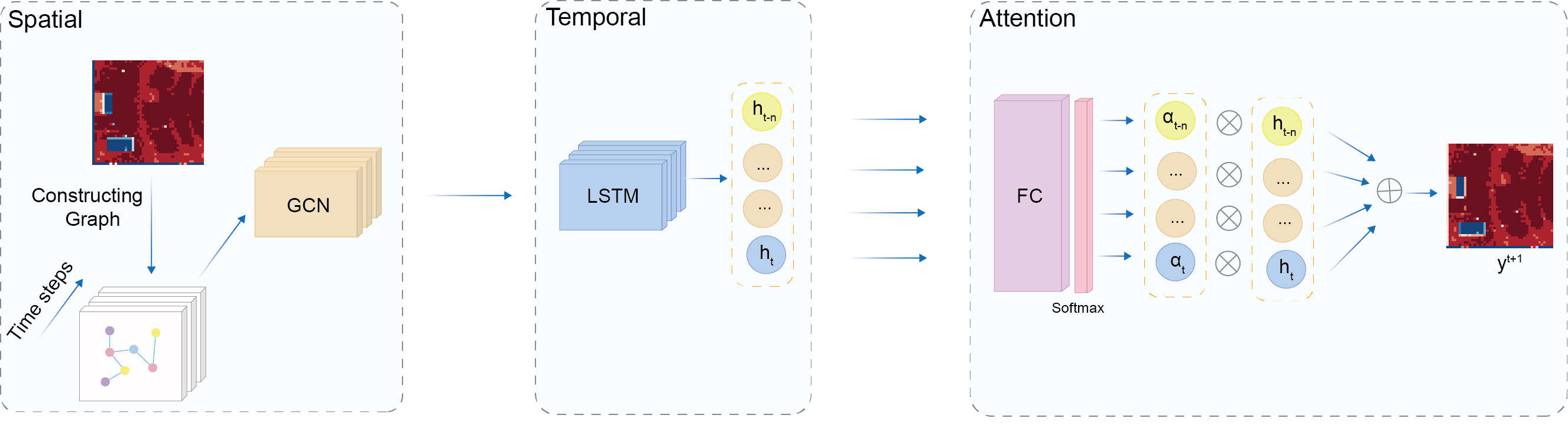}
    \caption{Overview of the short-term prediction model framework.}
    \label{fig:your_label}
\end{figure}

\subsubsection{Spatial model}
Since the presence of moving obstacles can significantly deteriorate the link quality between transceivers situated at random locations, it is crucial to comprehend and model the spatial dependencies that govern the fluctuations within a link quality map. Specifically, each receiver can be regarded as a distinct vertex within a graph and we assume that the neighboring vertices of the receiver share a high degree of correlation in terms of link quality. We then add the edges between these neighbouring vertices to further construct a connected graph, which contains detailed spatial information. To extract meaningful features that accurately represent the environment's impact on link quality, we employ a two-layer Graph Convolutional Networks (GCN) model to extract spatial-domain features, taking into account the graph node and the adjacent links of the node to capture the correlation between link quality and environment details such as deployed objects. A multi-layer GCN can be expressed as:
\begin{equation}
    H^{(l+1)}=\sigma(\Tilde{D}^{-\frac{1}{2}}\Hat{A}\Tilde{D}^{-\frac{1}{2}}H^{(l)}\theta^{(l)}),
\end{equation}
where $\Hat{A}=A+I$ represents the adjacency matrix of the graph enhanced by the identity matrix to include self-connections, ensuring that each node also considers its own state in addition to its neighbors'. $\Tilde{D}$ is the degree matrix with $\Tilde{D}_{ii}=\sum_{j}\Hat{A}_{ij}$ to normalize the graph to account for the varying degrees of connectivity among nodes. $H^{(l)}$ is the output of layer $l$. $\theta^{(l)}$ is the parameter of layer $l$, and $\sigma$ is the activation function.


\subsubsection{Temporal model}
To complement the spatial analysis, we employ a temporal model to understand and predict the temporal fluctuations in link quality. These variances predominantly arise from the combined multipath effects of static and dyanmic blockages, which always evolve over the time horizon. For this purpose, we incorporate a Long Short-Term Memory (LSTM) layer into our short-term prediction framework. As a variant of the recurrent neural network, LSTM is designed to overcome the vanishing gradient problem and make use of the gate mechanism to capture long- and short-term dependencies. 


{In essence, the LSTM model includes three primary gates: the forget gate, input gate and output gate. These gates work together within a memory cell, combining the previous link states and the current environment details to update the hidden link states. The forget gate decides whether the link quality information in the previous memory should be discarded or not. The input gate regulates how much of the new contextual information should be added to the memory cell, while the output gate determines how the memory cell's contents should influence the hidden states. As such, the LSTM layer can effectively predict the link quality of the future time steps 
and capture the dynamic temporal link variations.} 

The integration of spatial and temporal prediction models offers a comprehensive understanding of the complex dynamics governing the mmWave wireless communication between the transceivers. In this setup, the long-term prediction module generates an initial link quality map based on environmental configurations. Subsequently, the short-term engine dynamically updates information at volatile transceiver locations. This synergistic approach ensures a continuous contextual knowledge base for the subsequent beam management process. 

\section{Problem Formulation}

\begin{table}[h]
\centering
\caption{Notations and Definitions.}
\begin{tabular}{|p{0.11\linewidth}|p{0.75\linewidth}|}
\hline
\textbf{Symbol} & \textbf{Description} \\
\hline
$N$ & Number of mmWave APs in a wireless network environment \\
\hline
$C$ & Number of orthogonal beam patterns associated with each mmWave AP \\
\hline
$M$ & Number of clients moving randomly within the space \\
\hline
$X$ & Set of environmental contexts corresponding to user locations \\
\hline
$X'$ & Uniform grid set\\
\hline
$x_m^t$ & Location of user $m$ at time step $t$ \\
\hline
$T$ & Predetermined time horizon \\
\hline
$A$ & Size of the AP candidate set for each user location \\
\hline
$\mathcal{A}_m^t$ & Beam set at time step $t$ for user $m$, consisting of beams from the top-$A$ APs \\
\hline
$B$ & Budget limiting the maximum number of arms that can be probed \\
\hline
$\mathcal{S}_m^t$ & Subset of beams selected to play at time step $t$ for user $m$ \\
\hline
$\mathcal{X}$ & Arm context information set \\
\hline
$r_{a|x}$ & Reward of selecting beam $a$ at user location $x$ \\
\hline
$\mu_{a|x}$ & Expected value of the reward $r_{a|x}$ \\
\hline
$K$ & Penalty weight, which is the maximum number of users that can be connected to a single beam \\
\hline
$k_a$ & Current number of users connected to beam $a$ \\
\hline
$R(\mathcal{S}_m^t, \mathbf{r})$ & Reward of probing subset $\mathcal{S}_m^t$ \\
\hline
$Reg(T)$ & Expected cumulative regret over $T$ rounds \\
\hline
$h_{T}$ & Number of hypercubes\\
\hline
$P^{ue,t}$ & Less explored hypercubes set\\
\hline
$C^t$ & Counter that keeps track of the number
of times the arms within the hypercube has been selected\\
\hline
$n_x$ & The number of times the grid $x$ has been visited in the previous time periods\\
\hline
$K(n_x)$ & Control function that increases with $n_x$ to determine under-explored hypercubes\\
\hline
\end{tabular}
\label{tab:symbols}
\end{table}

In this section, we elaborate on the process of transforming the joint transceiver paring and beam selection problem into a contextual combinatorial MAB problem using link quality information, and then derive an online probing algorithm for effective beam management.
{The choice of MAB over other learning-based algorithms is driven by two main factors: First, the MAB algorithm can quickly adapt to changing environment fast, whereas supervised learning will require retraining under such conditions. Second, compared to deep reinforcement learning, the MAB method requires significantly fewer computational resources, making it more suitable for real-time applications.}
Important notations used in the paper can be found in Table.~\ref{tab:symbols}.

Let $N$ denote the number of mmWave APs in a wireless network environment and $C$ represent the number of orthogonal beam patterns associated with each mmWave transmitter. Additionally, assume that $M$ client receivers are moving randomly within the space. Let $X$ represent the set of environmental contexts corresponding to the user locations. At each time step $t = 1, ..., T$ , where $T$ denotes a predetermined time horizon, the location $x_m^t \in X$ of user $m$ at time $t$ can be observed. Subsequently, the link quality predictions obtained from Sec. III are utilized to rank APs based on the maximum signal strength they can offer at each user location.  We establish an AP candidate set with a size of $A$ for each user location $x_m^t$ by selecting the top-$A$ APs. Considering beam pairing, all the beams from each AP candidate set collectively form a beam set $\mathcal{A}^t_m=\{a_i^j|i\leq C, j \leq A\}$, where $a_i^j$ represents the $i$-th beam of the $j$-th candidate AP. 

Based on the above setup, each beam from $\mathcal{A}^t_m$ can be treated as an \textit{arm} in an MAB problem. At each time step $t$, instead of playing just one arm, a subset of arms $\mathcal{S}^t_m \subset \mathcal{A}^t_m$ will be selected to play. There exists a budget $B$ that limits the maximum number of arms that can be probed, i.e., $|\mathcal{S}^t_m|\leq B$. To choose the optimal subset $\mathcal{S}^t_m$, we incorporate \textit{arm context} information $\mathcal{X}=\{O_a| a\in \mathcal{S}^t_m\}$. Specifically, in our considered scenario, arm context refers to the direction of each beam, where the details about arm selection will be introduced in Sec. V. Here we define the reward of selecting a beam $a$ at the user location $x$ as corresponding to the signal strength of the beam alignment process. We denote this reward by $r_{a|x}$ and its expected value by $\mu_{a|x}=\mathbb{E}[r_{a|x}]$. To model the probing overhead and adhere to the load constraint, the reward of playing a single arm can be further formulated as $\frac{K-k_a}{K}r_{a|x}$, where the penalty weight $K$ is the maximum number of users that can be connected to a single beam, $k_a$ is the current number of users that have connected to beam $a$. Overall, the design of this reward function effectively guides the users to select beams of some APs with lower traffic load while maintaining a relatively high link quality.

As mentioned earlier, in our context, we probe a \textit{subset} of arms $\mathcal{S}^t_m \subset \mathcal{A}^t_m$ to assess the qualities of these arms. We then select the arm that yields the highest reward in $\mathcal{S}^t_m$. Let $\textbf{r}=\{r_{a|x}\}_{a \in \mathcal{S}^t_m}$ denote the collection of rewards of arms in the probing set. The reward of probing $\mathcal{S}^t_m$ can then be formulated as $R(\mathcal{S}^t_m, \textbf{r})$, signifying that the reward is jointly determined by the selection of the subset and the individual reward of each arm in the subset:
\begin{equation}
    R(\mathcal{S}^t_m,\textbf{r})=\max\limits_{a \in \mathcal{S}^t_m} \{\frac{K-k_a}{K}r_{a|x^t_m}\}.
\end{equation}

\noindent {It is worth noting that the $\max$ function in Eq.~(2) is a submodular function}~\cite{10228988}, {which is featured by the diminishing returns property, i.e. given the arm sets $\mathcal{A}$ and $\mathcal{B}$, where $\mathcal{A}\subseteq \mathcal{B}$, for all arms $m \notin \mathcal{B}$, if $R$ is a submodular function, we have:}
\begin{equation}
    R(\mathcal{A}\cup{m},\textbf{r})-R(\mathcal{A},\textbf{r}) \geq R(\mathcal{B}\cup{m},\textbf{r})-R(\mathcal{B},\textbf{r}).
\end{equation}

We prove this with the following theorem: 
\begin{theorem}
    The $\max$ function $R$ in Eq.~(2) is a submodular function
\end{theorem}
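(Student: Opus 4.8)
The plan is to reduce the claim to an elementary monotonicity property of the positive part of a difference. First I would collapse the reward of each individual arm into a single scalar by setting $w_a := \frac{K-k_a}{K}\,r_{a|x^t_m}$, which depends only on the arm $a$ (through $k_a$ and $r_{a|x^t_m}$) and the fixed user location $x^t_m$, and \emph{not} on the subset in which $a$ happens to sit. With this notation Eq.~(2) reads $R(\mathcal{S},\textbf{r}) = \max_{a\in\mathcal{S}} w_a$ for any probing subset $\mathcal{S}$, so the two sides of the diminishing-returns inequality~(3) are exactly the marginal gains of the $\max$-operator under the addition of a new arm.

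Next I would compute that marginal gain for an arbitrary set. Writing $\sigma(\mathcal{S}) := \max_{a\in\mathcal{S}} w_a$, adjoining a fresh arm $m\notin\mathcal{S}$ can only raise the running maximum when its weight beats the current best, so that $R(\mathcal{S}\cup\{m\},\textbf{r}) - R(\mathcal{S},\textbf{r}) = \max\{\sigma(\mathcal{S}),\,w_m\} - \sigma(\mathcal{S}) = (w_m - \sigma(\mathcal{S}))^{+}$, where $(\cdot)^{+} := \max\{0,\cdot\}$. Applying this identity once with $\mathcal{S}=\mathcal{A}$ and once with $\mathcal{S}=\mathcal{B}$ rewrites the target inequality~(3) as the single scalar statement $(w_m - \sigma(\mathcal{A}))^{+} \ge (w_m - \sigma(\mathcal{B}))^{+}$.

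Finally I would close the argument with two monotonicities chained together. Because $\mathcal{A}\subseteq\mathcal{B}$, the maximum taken over the larger index set is at least as large, i.e. $\sigma(\mathcal{A}) \le \sigma(\mathcal{B})$, whence $w_m - \sigma(\mathcal{A}) \ge w_m - \sigma(\mathcal{B})$. Since the map $t\mapsto t^{+}$ is nondecreasing, the required inequality follows at once, which establishes submodularity of $R$.

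I expect no deep obstacle here: the result is a structural fact about $\max$ rather than anything peculiar to the reward model, so the only points demanding care are bookkeeping ones. The essential one is the well-definedness noted above, namely that $w_a$ is a genuine per-arm constant that does not vary with the enclosing set, since submodularity of $\max$ can fail if the underlying weights are allowed to depend on the chosen subset. A second, minor point is the degenerate empty set, handled by the convention $\sigma(\emptyset) = -\infty$ so that $(w_m - \sigma(\emptyset))^{+} = w_m$; for the nonempty probing subsets used by the algorithm this case never arises, so it can be dispatched in a sentence.
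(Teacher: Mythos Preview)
Your argument is correct and is essentially the paper's proof in compressed form: both verify the diminishing-returns inequality~(3) by comparing the marginal gain of adjoining $m$ to $\mathcal{A}$ versus $\mathcal{B}$, using that $\max_{a\in\mathcal{A}} w_a \le \max_{a\in\mathcal{B}} w_a$ whenever $\mathcal{A}\subseteq\mathcal{B}$. The paper carries this out as an explicit three-way case split on where $w_m$ sits relative to the two maxima $\alpha=\sigma(\mathcal{A})$ and $\beta=\sigma(\mathcal{B})$, whereas you package the identical computation through the identity $R(\mathcal{S}\cup\{m\},\textbf{r})-R(\mathcal{S},\textbf{r})=(w_m-\sigma(\mathcal{S}))^{+}$ and the monotonicity of $t\mapsto t^{+}$.
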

\begin{proof}
    Suppose the maximum element in $\mathcal{A}$ and $\mathcal{B}$ are $\alpha$ and $\beta$, respectively. Since $\mathcal{A}\subseteq \mathcal{B}$, we will have $\alpha \leq \beta$. For an element $m \notin \mathcal{B}$, there can be three conditions to be discussed. If $m>\beta$, we have $R(\mathcal{A}\cup{m},\textbf{r})-R(\mathcal{A},\textbf{r})=m-\alpha$, which is greater than $R(\mathcal{B}\cup{m},\textbf{r})-R(\mathcal{B},\textbf{r})=m-\beta$. If $\alpha \leq m \leq \beta$, we will have $R(\mathcal{B}\cup{m},\textbf{r})=R(\mathcal{B},\textbf{r})$ and $R(\mathcal{A}\cup{m},\textbf{r})-R(\mathcal{A},\textbf{r})>0$. The condition still holds. Finally, if $m<\alpha$, we have $R(\mathcal{A}\cup{m},\textbf{r})-R(\mathcal{A},\textbf{r})=R(\mathcal{B}\cup{m},\textbf{r})-R(\mathcal{B},\textbf{r})=0$. Therefore, we can conclude that the $\max$ function will always satisfy the diminishing returns property and is a submodular function.
\end{proof}
This property of the reward function aligns well with our formulated contextual combinatorial MAB problem, which is specifically tailored for submodular functions. 

Given the above property, an online probing method with known expected rewards is outlined in Algorithm 1, {which is a greedy algorithm that always selects the arm to maximize the marginal reward (lines 4-5).}
The primary objective is to maximize the \textit{cumulative} reward expectation over $T$ rounds, i.e., $\sum_{t=1}^T\sum_{m=1}^M\mathbb{E}[R(\mathcal{S}^t_m),\textbf{r}]$. Assuming an optimal algorithm could consistently select the best arm set $\mathcal{S}^{*,t}_m$ at every round $t$ for each user $m$, the performance of our algorithm can be measured by the expected cumulative regret in Eq.~(4), which quantifies the expected cumulative difference between the maximum reward achieved by the optimal algorithm and the reward obtained by Algorithm 1.
\begin{equation}
    Reg(T)=\sum_{t=1}^T\sum_{m=1}^M\mathbb{E}[R(\mathcal{S}^{t,*}_m,\textbf{r})-R(\mathcal{S}^t_m,\textbf{r})]. 
\end{equation}
{It is worth noting that there is no gap between the transition of these two evaluation metrics. As the algorithm that provides the best cumulative reward will definitely yield a smaller difference to the optimal reward. Hence, the objective of maximizing the expected cumulative reward is equivalent to minimizing the expected regret in our problem.}

It has been proven that maximizing a submodular set function with known reward expectation is NP-hard~\cite{goel2006asking}. However, a greedy probing algorithm has been proposed in~\cite{nemhauser1978analysis} that guarantees achieving no less than $(1-1/e)$ of the optimal solution. Therefore, as described in Algorithm~\ref{alg:greedy}, beams can be sequentially selected based on their marginal reward to ensure an asymptotic optimality.
Given that no polynomial time algorithm can achieve a better approximation for the submodular function maximization problem, our objective is to find an algorithm that achieves sublinear $(1-\frac{1}{e})$-approximation regret, as formulated in Eq. (5):

\begin{equation}
    Reg(T)=(1-\frac{1}{e})*\sum_{t=1}^T\sum_{m=1}^M\mathbb{E}[R(\mathcal{S}^{t,*}_m),\textbf{r}]-\sum_{t=1}^T\sum_{m=1}^M\mathbb{E}[R(\mathcal{S}^t_m,\textbf{r})]. 
\end{equation}

\begin{algorithm}
    \caption{Online Probing Algorithm}
    \label{alg:greedy}
    \begin{algorithmic}[1]
        \INPUT arm set $\mathcal{A}$, reward function $R$, budget $B$
        \OUTPUT super arm $\mathcal{S}$
        \STATE $\mathcal{S} \leftarrow \varnothing$;
        \STATE $i=0$;
        \WHILE{$i \leq B$}
            \STATE $m=\underset{m \in \mathcal{A} \backslash
\mathcal{S}}{\mathrm{argmax}}\, R(\mathcal{S} \cup \{m\},\textbf{r})-R(\mathcal{S},\textbf{r}) $;
            \STATE $\mathcal{S}=\mathcal{S}\cup\{m\}$;
            \STATE $i=i+1$;
        \ENDWHILE

    \end{algorithmic}
\end{algorithm}

\section{Contextual Combinatorial Beam Management}
Built upon the problem formulation as discussed in Sec.~IV, this section introduces a novel contextual combinatorial MAB approach for beam management in mmWave wireless networks. In practical scenarios, it will be infeasible to obtain prior knowledge of the expected rewards for arms. Consequently, direct application of Algorithm~\ref{alg:greedy} is not viable. Instead, we aim to learn the expected values of arms using a contextual combinatorial MAB framework as illustrated in Sec.~IV. Such a framework was originally designed for general bandit problems with submodular reward function~\cite{chen2018contextual}. However, our approach differs by incorporating both the arm context (beam correlation) and the environment context (location-aware link qualities). Additionally, we subtly integrate a beam selection scheme to enhance rewards during the exploration period. 

\begin{figure}[htbp]
\centerline{\includegraphics[scale=0.1]{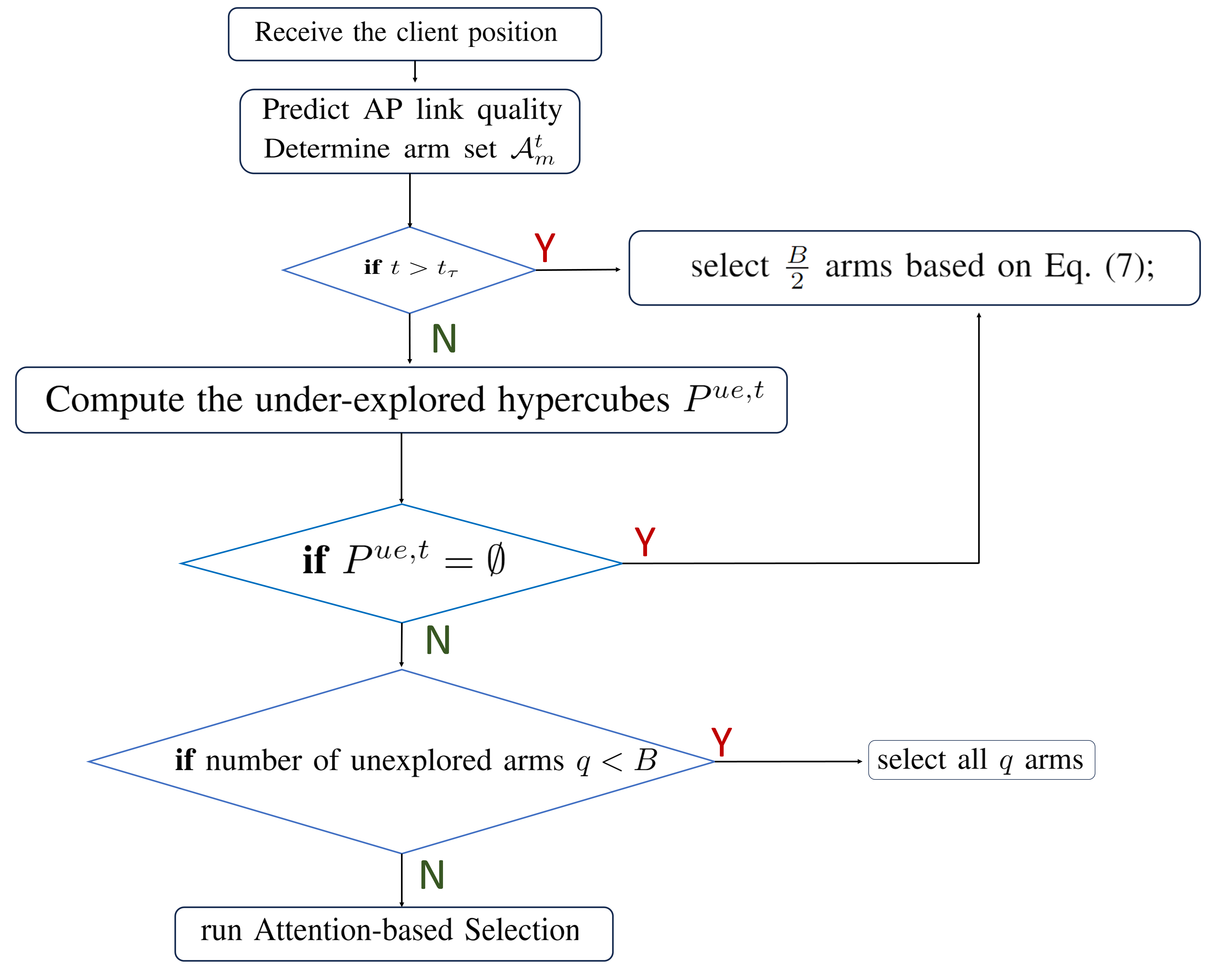}}
\caption{{Workflow of CCBM procedure.}}
\label{diagram}
\end{figure}

Algorithm 2 summarizes our CCBM framework using online probing to achieve the expected rewards for beam arms. {The overall workflow is shown in} Fig.~\ref{diagram}. Initially, the environment context space $X$ is divided into a uniform grid set $X'=\{x_1,x_2,...,x_n\}$, with $n$ representing the total number of grids into which the space is partitioned. At each time step $t$, when a user location $x^t_m$ is observed, it is mapped to the corresponding grid $x$ in $X'$ to which it belongs (Lines 4-5). In addition to managing the environment context space, we partition the arm context space $\mathcal{X}=[0,2\pi]$, where it is first normalized to $[0,1]$, and then divided into $h_T$ \textit{hypercubes}, each with a size of $\frac{1}{h_T}$. 

Specifically, at each time step $t$, for each user $m$, the algorithm observes the environment context $x^t_m$ (e.g., the user's location) and then maps it to the corresponding grid $x \in X'$. {Later, the link quality prediction model in Sec. III is utilized to predict link quality of each AP given the position information. The candidate APs are chosen based on their ranking of the predicted link quality values. Then, the beams from all the candidate APs form the arm set $\mathcal{A}^t_m$ (Lines 6-7).} For each arm $a$ in $\mathcal{A}^t_m$ with arm context $O_a$, the algorithm determines a hypercube $p_a \in \mathcal{X}$ such that $O_a \in p_a$ holds. The collection of hypercubes at time slot $t$ is denoted as $\textbf{p}^t={p_a}_{a \in \mathcal{A}^t_m}$ (Lines 12-13). Subsequently, the algorithm identifies the hypercubes $p_a \in \textbf{p}^t $ that are explored less frequently based on the following criteria:
\begin{equation}
    P^{ue,t}=\{p_a \in \textbf{p}^t| x \in X', a \in \mathcal{A}^t,  C^t(p_a|x)<K(n_x)\},
\label{equ5}
\end{equation}
where $C^t(p_a|x)$ is a counter that keeps track of the number of times the arms within the hypercube $p_a$ are selected when the receiver location is mapped to $x$ during time periods $1,2,..,t-1$. $K(n_x)$ is a deterministic, monotonically increasing control function, and $n_x$ represents the number of times grid $x$ has been visited in the previous time periods.

Next, the algorithm determines whether to explore or exploit based on the number of arms located in under-explored hypercubes. If the set of under-explored arms is non-empty, the algorithm enters an exploration phase. Let $q$ be the size of under-explored arm set. If the under-explored arm set contains at least $B$ arms, i.e., $q \geq B$, we employ an arm selection scheme called \textit{attention-based selection} (Lines 31-37) instead of randomly selecting arms as in prior MAB works.

\begin{algorithm}[t]
    \caption{Contextual Combinatorial Beam Management}
    \label{alg:max_of_two}
    \begin{algorithmic}[1]
        \INPUT user number $M$, arm set $\mathcal{A}^t_m$, reward function $R$, budget $B$, time horizon $T$, control function $K(n_x)$, arm context space $\mathcal{X}$, grid set $X$
        \INI $\forall x \in X, n_x=0$\\$\forall p_a \in \mathcal{X}, C^t(p_a|x)=0, \hat{r}(p_a|x)=0$
        \FOR{$t=1,2,\dots,T$}
        \FOR{$m=1,2,\dots,M$}
        \STATE $\mathcal{S}^t_m=\emptyset$;
        \STATE Receive the client position information $x^t_m$;
        \STATE Map the position context to grid $x\leftarrow x^t_m$;
        \STATE {Predict link quality based on the positions of APs};
        \STATE {Determine arm set $\mathcal{A}^t_m$ based on the predicted AP link quality};
        \STATE $n_x=n_x+1$;
        \IF{$t>t_\tau$}
        \STATE $\mathcal{S}^t_m \leftarrow$ select $\frac{B}{2}$ arms based on Eq.~(\ref{equ6});
        \ELSE
        \STATE Find $\textbf{p}^t$, such that $\forall a \in \mathcal{A}^t_m$, $O_a \in p_a, p_a \in \mathcal{X}$;
        \STATE Compute the under-explored hypercubes $P^{ue,t}$ using Eq.~(\ref{equ5});
        \IF{$P^{ue,t}= \emptyset$}
        \STATE $\mathcal{S}^t_m \leftarrow$ select $B$ arms based on Eq.~(\ref{equ6});
        \ELSE
        \IF{number of unexplored arms $q <B$}
        \STATE $\mathcal{S}^t_m \leftarrow$ select all $q$ arms and the other $B-q$ arms based on Eq.~(\ref{equ6});
        \ELSE
        \STATE run Attention-based Selection(void);
        \ENDIF
        \ENDIF
        \ENDIF
        \FOR{each arm $a \in \mathcal{S}^t_m$}
        \STATE observe the quality $r_{a|x}$ of $a$; 
        \STATE update $\hat{r}(p_a|x)=\frac{\hat{r}(p_a|x)C^t(p_a|x)+r_{a|x}}{C^t(p_a|x)+1}$;
        \STATE update counter $C^t(p_a|x)=C^t(p_a|x)+1$;
        \ENDFOR
        \ENDFOR
        \ENDFOR

        \FUN Attention-based Selection(void):
        \IF{$|\mathcal{Z}|\geq B$}
        \STATE $\mathcal{S}^t_m \leftarrow$ randomly select arms from $\mathcal{Z}$;
        \ELSIF{$0<|\mathcal{Z}|<B$}
        \STATE  $\mathcal{S}^t_m \leftarrow$ select all arms in $\mathcal{Z}$ and other arms randomly;
        \ELSE
        \STATE $\mathcal{S}^t_m \leftarrow \{a^{t-1}_m\} \cup$$\{ B-1$ arms randomly selected\};
        \ENDIF

    \end{algorithmic}
\end{algorithm}

\vspace{+0.1cm}
\textbf{Attention-based Selection}: Let $\mathcal{Z}$ be the set of arms in the under-explored arm set with $C^t(p_a|x)=0$, indicating that at grid $x$, the hypercube to which arm $a$ belongs to has never been chosen until time period $t-1$. If $|\mathcal{Z}| \geq B$, then randomly select $B$ arms from $\mathcal{Z}$. If $0< |\mathcal{Z}| <B$, select all the arms in set $\mathcal{Z}$ and randomly select other under-explored arms. The rationale behind this step is intuitive: If we only randomly select arms without attentions, there is a possibility that certain hypercubes providing good rewards may not be identified in the initial rounds, leading to suboptimal exploration. 

To be specific, when $|\mathcal{Z}|=0$, indicating that the under-explored hypercubes have been chosen at least once, our algorithm first identifies the arm $a_m^{t-1}$ chosen for the user $m$ in the last time step $t-1$. Since we are considering a continuous movement (action space), for a single user, the location at time step $t$ should be close to the location at time step $t-1$. Therefore, we can still assume $a_m^{t-1}$ is a good candidate arm that can provide satisfactory rewards at round $t$, and it will be included in the probing set $\mathcal{S}^t_m$, while the other arms are chosen randomly. In this way, we strategically incorporate attention-based exploitation into the exploration phase.

In some cases, if the under-explored arm set contains fewer than $B$ elements, i.e., $q\leq B$, then the algorithm selects all $q$ arms (Lines 17-18). The remaining arms are selected sequentially by exploiting the estimated rewards as follows:
\begin{equation}
a=\underset{a \in \mathcal{A}^t_m \backslash
\mathcal{S}^t_m}{\mathrm{argmax}}\, R(\mathcal{S}^t_m \cup \{m\},\hat{\textbf{r}})-R(\mathcal{S}^t_m,\hat{\textbf{r}}), 
\label{equ6}
\end{equation}
where $\hat{r}$ is used to denote the sampled reward of each arm $a$. If there are no under-explored arms, all $B$ arms will be selected based on Eq.~(\ref{equ6}). 

Since we consider an mmWave wireless network scenario with fixed APs, it is intuitive that after a certain number of rounds of exploration, we can have a relatively comprehensive knowledge of the network condition. Thus, it will be more rewarding to perform exploitation after a certain time step. To this end, we incorporate a \textit{early stopping criterion} to guide the algorithm into an exploitation phase (Lines 9-10).

\vspace{+0.1cm}
\textbf{Early Stopping Criterion}: We assume that after a time threshold $t_\tau$ 
the algorithm enters a pure exploitation period. Since arms yield different rewards in terms of different grids $x$, if all the grids have been visited by users several times, the network condition can be well revealed. Thus, we set $t_\tau$ equal to the number of grids $n$ across the space. It is also worth noting that we reduce the size of the probing set to $\frac{B}{2}$ during the pure exploitation period. Numerical results in Sec.~VII will show that this added criterion greatly reduces the beam search overhead while maintaining a competitive reward.

\section{Theoretical Analysis}
In this section, we provide a rigorous theoretical analysis of the regret bound using our CCBM approach to achieve the optimal beam paring among mmWave transceivers. 
The upper bound is derived under the principle that arms belonging to similar context space should have similar expected reward values.

\begin{assumption}
(Bounded Reward) The reward of each arm is bounded by $0<r<r^{max}$.
\end{assumption}
\begin{assumption}
(Lipschitz-continuous) There exists $C > 0$ such that for any arm $a,a'$ with arm context $O_a,O_{a'}\in \mathcal{X}$, we have $|r_a-r_{a'}|\leq C||O_a-O_{a'} ||_1$.
\end{assumption}

It is worth noting that these two assumptions are mild assumptions. As the reward of selecting each arm refers to the signal strength at each arm, it is easy to follow that the reward is bounded. For Assumption 2, since the arm context represents the orientation of beams and the reward is bounded, we can always find $C$ such that the Lipschitz-continuous assumption holds.

We set the $h_T=\lceil T^{\frac{1}{4}} \rceil$ for the arm context partition and $K(n_x)=n_x^{\frac{1}{2}}\log(n_x)$ as the control function to identify the under-explored arm hypercubes. Then, the regret can be bounded as follow:
\begin{theorem}
    Let $h_T=\lceil T^{\frac{1}{4}} \rceil$ and $K(n_x)=n_x^{\frac{1}{2}}\log(n_x)$, if Assumptions 1 and 2 hold true, the regret $R(T)$ is bounded by:
    
   \begin{align}
R(T) &\leq (1 - \frac{1}{e})B r^{\text{max}} \left(2 M(M^{\frac{1}{2}}T^{\frac{3}{4}} \log(MT) + T^{\frac{1}{4}})\right) \nonumber \\
&\quad + \left(1 - \frac{1}{e}\right) \cdot MB^2r^{\text{max}} \left(\begin{array}{c} M^{\text{max}} \\ B \end{array}\right) \frac{\pi ^2}{3} \nonumber \\
&\quad + \left(3BL + \frac{8}{3}B(r^{\text{max}}+L)\right)T^{\frac{3}{4}}.
\end{align}

\end{theorem}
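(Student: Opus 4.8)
The plan is to decompose the cumulative regret $R(T)$ in Eq.~(8) into three separable sources—exploration cost, discretization bias, and estimation failure—and bound each one under the prescribed choices $h_T=\lceil T^{1/4}\rceil$ and $K(n_x)=n_x^{1/2}\log(n_x)$. First I would partition every round $(t,m)$ into \emph{exploration} rounds, in which the under-explored set $P^{ue,t}$ of Eq.~(6) is nonempty so that Algorithm~2 plays at least one arm from an insufficiently sampled hypercube, and \emph{exploitation} rounds, in which the greedy rule of Eq.~(7) drives the selection from the estimated marginal rewards. Since every arm reward is bounded by $r^{max}$ (Assumption~1) and the greedy selection inherits the $(1-1/e)$ submodular guarantee established via Theorem~1, each individual round contributes at most $(1-\frac{1}{e})Br^{max}$ to the regret, so the problem reduces to counting and weighting the rounds of each type.

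For the exploration term I would bound the total number of exploration rounds. A hypercube $p_a$ at grid $x$ ceases to be under-explored once $C^t(p_a|x)\geq K(n_x)$, so each of the $h_T$ hypercubes per grid can trigger exploration at most $K(n_x)\leq T^{1/2}\log T$ times. Summing over the $M$ users, over the $h_T=\lceil T^{1/4}\rceil$ hypercubes, and over the per-user visit counts $n_x$ whose growth is absorbed into the union-bound factor $\log(MT)$, produces a count of order $M(M^{1/2}T^{3/4}\log(MT)+T^{1/4})$; multiplying by the per-round loss $(1-\frac{1}{e})Br^{max}$ recovers the first term of Eq.~(8).

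Next, for the exploitation rounds I would split the loss into a \emph{discretization} part and an \emph{estimation-failure} part. By the Lipschitz condition (Assumption~2), any two arms whose contexts fall in one hypercube differ in reward by at most the hypercube scale $1/h_T = T^{-1/4}$ times the Lipschitz constant; propagating this per-arm bias through the submodular objective over the $B$ selected arms and all $T$ rounds yields a deterministic discretization loss of order $T^{3/4}$, whose coefficient $3BL+\frac{8}{3}B(r^{max}+L)$ collects the Lipschitz constant together with the reward range $r^{max}$ and the auxiliary bound $L$, giving the third term of Eq.~(8). For the estimation-failure part I would invoke a Chernoff--Hoeffding concentration inequality to control the probability that the empirical mean $\hat r(p_a|x)$, formed from at least $K(n_x)$ samples, deviates from its true value by more than the discretization scale. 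The chosen control function makes this failure probability summable, so the tail sum $\sum_{n\geq 1}n^{-2}=\pi^2/6$ appears; after a union bound over the $\binom{M^{max}}{B}$ candidate super-arms and the $M$ users, this produces the second term $(1-\frac{1}{e})MB^2r^{max}\binom{M^{max}}{B}\frac{\pi^2}{3}$.

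The main obstacle I anticipate is the exploitation step, because there the $(1-1/e)$ greedy guarantee must be made \emph{robust} to reward-estimation error: I need to show that running the greedy rule of Eq.~(7) on the empirical rewards $\hat{\mathbf r}$ rather than on the true rewards still attains at least $(1-\frac{1}{e})R(\mathcal{S}^{t,*}_m,\mathbf r)$ up to an additive slack proportional to $B$ times the per-arm error. Establishing this perturbed submodular-maximization bound, and then verifying that the per-arm error is simultaneously controlled by the discretization scale $1/h_T$ and by the concentration event, is the technically delicate core that ties the three terms together and certifies the claimed sublinear $\tilde{O}(T^{3/4})$ regret.
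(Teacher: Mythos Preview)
Your proposal is correct and follows essentially the same route as the paper: the same exploration/exploitation split, the same counting argument for the exploration term, and the same two-way division of exploitation regret into a Lipschitz (discretization) piece and a Chernoff--Hoeffding (concentration-failure) piece, each landing on the corresponding term of Eq.~(8). The one concrete device the paper uses that your sketch leaves implicit is an explicit threshold $An_x^{\theta}$ (with the eventual choice $A=2B(r^{\max}+L)$, $\theta=-1/4$) that formally separates ``suboptimal'' super-arms from ``near-optimal'' ones; this is exactly the mechanism that resolves the robustness obstacle you flag, so when you carry out the details you should introduce that threshold rather than trying to prove a freestanding perturbed-greedy lemma.
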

\begin{proof}
{The regret here is the same as we introduced before, which is the expected cumulative difference between the maximum reward achieved by the optimal algorithm and the reward obtained by proposed algorithm.} The regret \( R(T) \) can be divided into the following summands:
\[ \mathbb{E}[R(T)] = \mathbb{E}[R_{explore}(T)] + \mathbb{E}[R_{exploit}(T)],\]
where the term \( \mathbb{E}[R_{explore}(T)] \) is the regret due to the exploration process, and the term \( \mathbb{E}[R_{exploit}(T)] \) corresponds to the regret in the exploitation phase. We first derive a bound on \( \mathbb{E}[R_{explore}(T)] \). According to Algorithm 2, the set of under-explored hypercubes \( P^{\text{ue},t}_T \) is non-empty during the exploration phase, which implies that there exists at least one hypercube \( p \) with  \( C^t(p|x) \leq K(n_x) = n_x^{\frac{1}{2}} \log(n_x) \). Because we only explore in the first
$t_\tau$ rounds, \(n_x < Mt_\tau <MT \) holds. Certainly, there can be a maximum of \( \lceil(MT)^{\frac{1}{2}} \log(MT)\rceil \) exploration phases in which \(p\) is under-explored. Given \( h_T \) hypercubes in the partition and a total of \(M\) users, the upper limit for the exploration phases is \( h_TM \lceil (MT)^{\frac{1}{2}} \log(MT) \rceil \). Owing to the submodularity of reward function and its bounded nature, the maximum regret for an incorrect selection in one exploration phase is constrained by \( (1 - 1/e)B r^{\text{max}} \). Therefore, we have
\begin{align*}
\mathbb{E}[R_{explore}(T)] \leq (1 - \frac{1}{e})B r^{\text{max}} h_TM \lceil(MT)^{\frac{1}{2}} \log(TM)\rceil\\
= (1 - \frac{1}{e}) B r^{\text{max}}M \lceil T^{\frac{1}{4}} \rceil \lceil(MT)^{\frac{1}{2}} \log(TM)\rceil.
\end{align*}
Given \( \lceil T^{\frac{1}{4}}\rceil \leq 2T^{\frac{1}{4}}\), we can further bound the maximum regret as:
\begin{align*}
\mathbb{E}[R_{explore}(T)] \leq (1 - \frac{1}{e})B r^{\text{max}}2 M(M^{\frac{1}{2}}T^{\frac{3}{4}} \log(MT) + T^{\frac{1}{4}}).
\end{align*}

Prior to establishing the limit on the expected value of $\mathbb{E}[R_{exploit}(T)]$, we first introduce some auxiliary functions. For each hypercube \( p  \), we define $ \bar{\mu}(p) = \sup_{O \in p} \mu(O) \quad \text{and} \quad \underline{\mu}(p) = \inf_{O \in p} \mu(O) $ to represent the best and worst expected quality over all contexts \( O \in p \). Furthermore, the context at center of a hypercube \( p \) is defined as \( \hat{O}_p \) and its expected quality is \( \hat{\mu}(p) = \mu(\hat{O}_p) \). Let
$\bar{\mathbf{\mu}}_p^t = [\bar{\mu}(p_1^t), \ldots, \bar{\mu}(p_{h_T}^t)], \quad \underline{\mu}_p^t = [\underline{\mu}(p_1^t), \ldots, \underline{\mu}(p_{h_T}^t)], 
 \tilde{\mu}_p^t = [\tilde{\mu}(p_1^t), \ldots, \tilde{\mu}(p_{h_T}^t)], $ and define \( \tilde{S}^{*,t}(p^t) \) as:
 \[ \tilde{S}^{*,t}(p^t) = \underset{S \subseteq \mathcal{A}^t_m, |S| \leq B}{\arg\max} \ R(S,\tilde{\mu}_p^t). \]
Let $\tilde{S}^{*,t}(p^t)$ be the optimal set and \( \tilde{S}^t(p^t) \) be the set that is chosen by Algorithm 1. We will have $R(\tilde{S}^t(p^t),\tilde{\mu}_{p}^t ) \geq (1 - 1/e) \cdot R(S^{*,t}(p^t),\tilde{\mu}_{p}^t ).$ The arm set \( \tilde{S}^t(p^t) \) assists in identifying the subsets of arms which are sub-optimal. Let
\[
\mathcal{L}^t(p^t) = \{ G \subseteq \mathcal{A}^t_m, |G| = B : R(\tilde{S}^t(p^t),\underline{\mu}^t_{p} ) - R(G,\bar{\mu}_{p}^t) \geq An_x^{\theta} \}
\]
be the collection of suboptimal subsets of arms for hypercubes \( p^t \), where \( A > 0 \) and \( \theta < 0 \). A subset \( G \) of arms in \( \mathcal{L}^t(p^t) \) is considered suboptimal for \( p^t \), since the sum of the worst expected reward in \( \tilde{S}^t(p^t) \) is at least an amount \( An_x^\theta \) higher than the sum of the best expected reward for subset \( G \). Subsets in \( S_B^t \backslash \mathcal{L}^t(p^t) \) is regarded as near-optimal for \( p^t \), where \( S_B^t \) denotes the set of all \( B \)-element subsets of arm set \( M^t \). Then, $\mathbb E[R_{exploit}(T)]$ can be divided into 
\[ \mathbb E[R_{exploit}(T)]= \mathbb{E}[R_s(T)] + \mathbb{E}[R_n(T)] \]

\noindent where  \( \mathbb{E}[R_s(T)] \) is the regret due to suboptimal choices, i.e., the subsets of arms from \( \mathcal{L}^t(p^t) \) are selected; \( \mathbb{E}[R_n(T)] \) is the regret due to near-optimal choices, i.e., the subsets of arms from \( S_B^t \backslash \mathcal{L}^t(p^t) \) are selected. We will prove the bound of each term in the following. We first derive the bound for $\mathbb{E}[R_s(T)]$.

For time slot \(1 \leq t \leq T\), denote \(W^t = \{\mathcal{P}^{ue,t} = \emptyset\}\) as the scenario where slot \(t\) is an exploitation phase. According to the definition of $\mathcal{P}^{ue,t}$, under this condition, it holds that \(C^t(p^t_m) > K(n_x) = n_x^{\frac{1}{2}} \log(n_x), \forall p \in p^t\). Define \(V^t_G\) as the occurrence that subset \(G \in \mathcal{L}^t(p^t)\) is selected at time slot \(t\). Then, it holds that

\vspace{-0.2cm}
\begin{align*}
R_s(T) = \sum_{t=1}^{T}\sum_{m=1}^{M} \sum_{G \in \mathcal{L}^t(p^t)} I_{\{V^t_G,W^t\}} \times \\
\left( (1 - \frac{1}{e})R( S^{*,t}(x^t),r^t) - R( G,r^t) \right),
\end{align*}

\noindent where for each slot, we assess the performance decrement resulting from opting for a non-ideal set of arms \(G \in \mathcal{L}^t(p^t)\). Since the maximal potential performance drop by choosing \(G\) is bounded by \((1 - \frac{1}{e})Br^{\text{max}}\), we have

\[
R_s(T) \leq (1 - \frac{1}{e})Br^{\text{max}} \sum_{t=1}^{T} \sum_{G \in \mathcal{L}^t(p^t)} I_{\{V^t_G,W^t\}}. 
\]
\noindent  The expected regret is constrained by
\[
\mathbb{E}[R_s(T)] \leq (1 - \frac{1}{e})B r^{\text{max}} \sum_{t=1}^{T} \sum_{G \in \mathcal{L}^t(p^t)} \mathbb{E} [I_{\{V^t_G, W^t\}}] 
\]
\[
= (1 - \frac{1}{e})B r^{\text{max}} \sum_{t=1}^{T} \sum_{G \in \mathcal{L}^t(p^t)} \text{Prob} \{V^t_G, W^t\}. 
\]
In the situation where event \(V^t_G\) takes place, according to the algorithm, the reward for choosing arms in \(G\) surpasses or equals that of selecting arms in \(\tilde{S}^t(p^t)\), signifying \(R( G,\hat{r}^t_p) \geq R( \tilde{S}^t(p^t),\hat{r}^t_p)\). Consequently, we have:
\[
\text{Prob} \{V^t_G, W^t\} \leq \text{Prob} \left\{ R( G,\hat{r}^t_p) \geq R(\tilde{S}^t(p^t),\hat{r}^t_p) \right\} 
\]
The condition on the right-hand side indicates the occurrence of at least one of the subsequent events for any \(H(n_x) > 0\):
\[
E_1 = \{R(G,\hat{r}^t_p) \geq R(G,\bar\mu^t_p,) + H(n_x), W^t \}
\]
\[
E_2 = \{R(\hat{S}^t(p^t),\hat{r}^t_p) \leq R(\hat{S}^t(p^t),\underline\mu^t_p) - H(n_x), W^t \}
\]

\noindent Hence, for the original event, it follows that:
\[
\left\{ R( G,\hat{r}^t_p) \geq R( \hat{S}^t(p^t),\hat{r}^t_p) \right\} \subseteq E_1 \cup E_2 
\]

We assess the probabilities of the two events \( E_1 \) and \( E_2 \) distinctly. We will first start by bounding \( E_1 \). Recall that the best expected quality for arms within the set \( p \) is defined by \( \bar{\mu}(p) = \sup_{O \in p} {\mu}(O) \). Hence, the expected quality of arm \( m \) in \( G \) is constrained by

\begin{align*}
E[\hat{r}(p^t_m)] &= \mathbb{E}\left[
  \frac{1}{|\mathcal{E}^t(p^t_m)|} \sum_{(\tau, k) : O_k^T \in {p^T_m}, n \in \mathcal{S}^T} r\left(O_k^\tau\right)
\right] \\
&= \frac{1}{|\mathcal{E}^t(p^t_m)|} \underbrace{\sum_{(\tau, k) : O_k^T \in {p^T_m}, n \in \mathcal{S}^T}}_{|\mathcal{E}^t(p^t_m)|summands} \underbrace{\mu\left(O_k^\tau\right)}_{\leq \bar{\mu}(p^t_m)}\\
&\leq \bar{\mu}(p^t_m).
\end{align*}

\noindent This deduction suggests that
\begin{align*}
\text{Prob}\{E_1\} &= \text{Prob}\{ R(G,\hat{r}^t_p) \geq R(G,\bar{\mu}^t_p) + H(n_x), W^t\} \\
&\leq \text{Prob}\{ \hat{r}(p_m^t) \geq \bar{\mu}(p_m^t) + \frac{H(n_x)}{B}, \exists m \in G, W^t\} \\
&\leq \text{Prob}\{ \hat{r}(p_m^t) \geq \mathbb{E}[\hat{r}(p_m^t)] + \frac{H(n_x)}{B}, \exists m \in G, W^t\} \\
&= \sum_{m \in G} \text{Prob}\{ \hat{r}(p_m^t) \geq \mathbb{E}[\hat{r}(p_m^t)] + \frac{H(n_x)}{B}, W^t\}.
\end{align*}

The rationale behind the first inequality lies in the proposition that $\{R(G,\tilde{r}^t_{p}) \geq R(G,\bar{\mu}^t_{p}) + H(n_x)\} \subseteq \{ \hat{r}(p_m^t) \geq \bar{\mu}(p_m^t) + \frac{H(n_x)}{B}, \exists m \in G \}$. This assertion can be confirmed through a direct application of contradiction and the principle of submodularity. Now, we can apply the Chernoff-Hoeffding bound using Lemma 1:
\begin{lemma}
    Let $X_1, X_2, \ldots, X_n$ be independent random variables bounded by the interval $[0, 1]$, i.e., $0 \leq X_i \leq 1$ for all $i = 1, 2, \ldots, n$. Let $\bar{X} = \frac{1}{n}\sum_{i=1}^{n}X_i$ be the sample mean and let $\mu = \mathbb{E}[\bar{X}]$ be the expected value of the sample mean. Then, for any $\epsilon > 0$, the following bounds hold:

\begin{enumerate}
    \item Upper Tail: $\Pr(\bar{X} - \mu \geq \epsilon) \leq \exp\left(-2n\epsilon^2\right)$.
    \item Lower Tail: $\Pr(\bar{X} - \mu \leq -\epsilon) \leq \exp\left(-2n\epsilon^2\right)$.
\end{enumerate}
\end{lemma}
Note that according to assumption 1, the estimated quality of each arm is bounded by \( r^{\text{max}} \) and recognizing that the occurrence of event \( W^t \) suggests that a minimum of \( n_x^{\frac{1}{2}}\log(n_x) \) samples were drawn. We have
\[
\text{Prob}\{E_1\} = \sum_{m \in G} \text{Prob}\{ \hat{r}(p_m^t) \geq \mathbb{E}[\hat{r}(p_m^t)] + \frac{H(n_x)}{B}, W^t\}
\]

\[
~~~~~~~~~~~~~= \sum_{m \in G} \text{Prob}\{ \hat{r}(p_m^t) - \mathbb{E}[\hat{r}(p_m^t)] \geq \frac{H(n_x)}{B}, W^t\}
\]

\[
~~~~~~~~~~~~~= \sum_{m \in G} \text{Prob}\{ \frac{\hat{r}(p_m^t)}{r^{max}} - \frac{\mathbb{E}[\hat{r}(p_m^t)]}{r^{max}} \geq \frac{H(n_x)}{Br^{max}}, W^t\}
\]

\[
~~~~\leq \sum_{m \in G} \exp \left( \frac{-2H(n_x)^2n_x^{\frac{1}{2}}\log(n_x)}{B^2(r^{\text{max}})^2} \right).
\]

{In the third equality, both sides are divided by $r^{max}$ to ensure that the range of $\frac{\hat{r}(p_m^t)}{r^{max}}$ and $\frac{\mathbb{E}[\hat{r}(p_m^t)]}{r^{max}}$ are both within [0, 1], satisfying the conditions of Lemma 1. Therefore, Lemma 1 can be directly applied, where $\epsilon$ is set to $\frac{H(n_x)}{Br^{max}}$.}

Similarly, for event \(E_2\), the estimation follows
\[
\text{Prob}\{E_2\} = \text{Prob}\{R(\tilde{S}^t(p^t),\hat{r}^t_{p}) \geq R(\tilde{S}^t(p^t),\underline{\mu}^t_{p}) - H(n_x), W^t\}.
\]

\[
\leq \sum_{m \in S^t(p^t)} \exp \left( \frac{-2H(n_x)^2n_x^{\frac{1}{2}}\log(n_x)}{B^2(r^{\text{max}})^2} \right)
\]

By far, the analysis is conducted with an arbitrary \( H(n_x) > 0 \). We will select \( H(n_x) = Br^{\text{max}}n_x^{-\frac{1}{4}}\sqrt{1-\frac{\log(B)}{2\log(n_x)}} \). Then, we have
\begin{align*}
\text{Prob}\{E_1\} &\leq B \exp \left( -\frac{2H(n_x)^2n_x^{\frac{1}{2}} \log(n_x)}{B^2(r^{\text{max}})^2} \right) \\
&\leq B \exp(-2 \log(n_x) -\log(B) \\
&\leq n_x^{-2} \leq t^{-2}
\end{align*}
and similarly
\begin{align*}
\text{Prob}\{E_2\} \leq n_x^{-2} \leq t^{-2}
\end{align*}

To sum up,
\begin{align*}
&\text{Prob} \{ V^t_G, W^t \} \leq \text{Prob} \{E_1 \cup E_2 \}\\
&\leq \text{Prob} \{E_1\} + \text{Prob} \{E_2\} \leq 2t^{-2}.   
\end{align*}

Given this, we have:
\[
E[R_s(T)] \leq \left(1 - \frac{1}{e} \right)Br^{\text{max}} \times \sum_{t=1}^{T} \sum_{m=1}^{M} \sum_{G \in \mathcal{L}(p^t)} \text{Prob} \{ V^t_G, W^t \}
\]
\[
~~\leq \left(1 - \frac{1}{e} \right)Br^{\text{max}} \left( \begin{array}{c} |\mathcal{A}^t_m| \\ B \end{array} \right)M \sum_{t=1}^{T} 2t^{-2}
\]
\[
~~~~~~\leq \left(1 - \frac{1}{e} \right)B(r^{\text{max}}) \left( \begin{array}{c} |\mathcal{A}^t_m| \\ B \end{array} \right) \cdot 2M \sum_{t=1}^{\infty} t^{-2}
\]
\[
\leq \left(1 - \frac{1}{e} \right)B(r^{\text{max}})M \left( \begin{array}{c} |\mathcal{A}^t_m| \\ B \end{array} \right) \frac{\pi^2}{3}, 
\]
where $\left( \begin{array}{c} |\mathcal{A}^t_m| \\ B \end{array} \right)$ denotes the maximum number of unique subsets of size \textit{B} can be chosen from $|\mathcal{A}^t_m|$. For the last step, we have:

\[  \sum_{t=1}^{\infty} t^{-2} = \frac{\pi^2}{6}  \].
This can be proven by using Taylor series expansion of $\sin (t)$:
\[ \sin(t) = t - \frac{t^3}{3!} + \frac{t^5}{5!} - \frac{t^7}{7!} + \cdots.\]
Since $\sin(t) \text{ has zeros at } t = 0, \pm \pi, \pm 2\pi, \pm 3\pi, \ldots$, it can be also represented as:
\[ \sin(t) = t \prod_{n=1}^{\infty} \left(1 - \frac{t^2}{n^2 \pi^2}\right)
\]
By expanding the infinite product and comparing the coefficient of $t^3$ from both the product and the Taylor series, the above equation can be proven.

The regret for $E[R_n(T)]$ can be bounded by 
\begin{align*}
\mathbb{E}\left[R_n(T)\right] \leq \sum_{t=1}^{T} \left(3BLT^{- \frac{1}{4}} + At^\theta\right)\\
\leq 3BLT^{\frac{3}{4}} + \frac{A}{1 + \theta}T^{1+\theta}. 
\end{align*}
the detailed proof is attached in appendix.

Combining the above results, the regret \(R(T)\) is bounded by

\vspace{-0.3cm}
\begin{align*}
R(T) \leq (1 - \frac{1}{e})B r^{\text{max}}2 M(M^{\frac{1}{2}}T^{\frac{3}{4}} \log(MT) + T^{\frac{1}{4}}) + \\
\left(1 - \frac{1}{e} \right)B^2r^{\text{max}}M \left( \begin{array}{c} |\mathcal{A}^t_m| \\ B \end{array} \right) \frac{\pi^2}{3} \\
+ 3BLT^{\frac{3}{4}} + \frac{A}{1 + \theta}T^{1+\theta}
\end{align*}

 In order to balance the leading orders, we select the parameters \(z, \gamma,A, \theta\) as following: \(z = \frac{1}{2} , \gamma = \frac{1}{4} , \theta = -\frac{1}{4}\), and \(A = 2Br^{\text{max}} + 2BL\). Thus, the regret \(R(T)\) reduces to

\vspace{-0.3cm} 
\begin{align*}
R(T) \leq &(1 - \frac{1}{e})B r^{\text{max}}2 M(M^{\frac{1}{2}}T^{\frac{3}{4}} \log(MT) + T^{\frac{1}{4}})\\
&+ \left(1 - \frac{1}{e}\right) \cdot MB^2r^{\text{max}} \left(\begin{array}{c} M^{\text{max}} \\ B \end{array} \right) \frac{\pi ^2}{3}\\
&+ \left(3BL + \frac{8}{3}B(r^{\text{max}}+L)\right)T^{\frac{3}{4}}
\end{align*}
 
\end{proof}
{In summary, to derive the upper bound for the regret, we calculate the regret bounds for both exploration and exploitation phase. During the exploration phase, we leverage the bounded nature of submodular function to establish the upper bound. For the exploitation phase, we further divide regret into components arising from suboptimal and near-optimal choices. The Chernoff-Hoeffding bound is applied to determine the upper regret bound associated with suboptimal choices. Lastly, we use Lipschitz-continuous to derive the bound for near-optimal choices. The leading order of the cumulative regret is $O(T^{\frac{3}{4}}\log(T))$, indicating a sublinear growth over the time horizon $T$. This implies that our CCBM scheme exhibits asymptotic optimality and converges toward the optimal strategy.}

\section{Performance Evaluation}
In this section, we evaluate the performance of our CCBM approach through extensive numerical evaluation. We begin by outlining the simulation setup, followed by a comparison of our algorithm with several baseline schemes in terms of multiple performance metrics. 

\subsection{Network settings}

We consider a 3-D indoor network scenario with a size of 40m$\times$40m$\times$3m, consisting of wooden tables, wooden chairs, metal cabinets, and 15 humans randomly moving at a speed of 0.8m/s to emulate dynamic obstacles. In this setup, we place four 60 GHz mmWave APs randomly in the space at a height of 2.9m. Specifically, each AP as the wireless transmitter is equipped with 8 orthogonal beam patterns with equal beam widths, covering a 360$^{\circ}$ azimuth. The environment context $X$ is uniformly divided into 1600 (40$\times$40) small grids, each measuring 1m$^2$. In particular, the neighbouring beams are regarded as arms with the similar context, hence the beams from the same transmitter are categorized into 4 hypercubes, resulting in a total of 16 hypercubes. We employ the commercial ray tracer \textit{Wireless Insite}$^\circledR$\cite{wirelessinsite} to generate realistic network environments and mmWave signal profiles. Additionally, we introduce noise following a normal distribution $\mathcal{N}(0, 5 \rm dB)$ into the obtained received signal strength (RSS) values to account for potential measurement errors in the context information, mirroring the conditions encountered in practical scenarios.

\subsection{Baseline schemes}
We conduct a thorough performance analysis by comparing our approach with the following baseline schemes:
\begin{itemize}
    \item \textbf{Optimal scheme}. This algorithm relies on an oracle search, indicating a priori knowledge of the expected reward $\mu_{a|x}$ for each arm $a$ within $\mathcal{A}^t_m$ at grid $x$. It always selects an optimal subset $\mathcal{S}^*$ to probe the best beam at each time step, offering an upper-bound performance for comparison with other feasible schemes.
    \item \textbf{UCB-based scheme}. This state-of-the-art scheme, proposed in~\cite{10228988}, employs an upper confidence bound approach. In each time step, it strategically selects $B$ arms with the highest estimated upper confidence bounds on their expected rewards. 
    \item \textbf{CC-MAB scheme}: We add the basic contextual MAB algorithm from~\cite{chen2018contextual} as the comparison point. The key distinction with our CCBM approach lies in the fact that CC-MAB incorporates a completely randomized arm selection process during the exploration phase.
\end{itemize}

\subsection{Cumulative regret for beam selection}
\vspace{-0.1cm}
\begin{figure}[htbp]
\centerline{\includegraphics[scale=0.3]{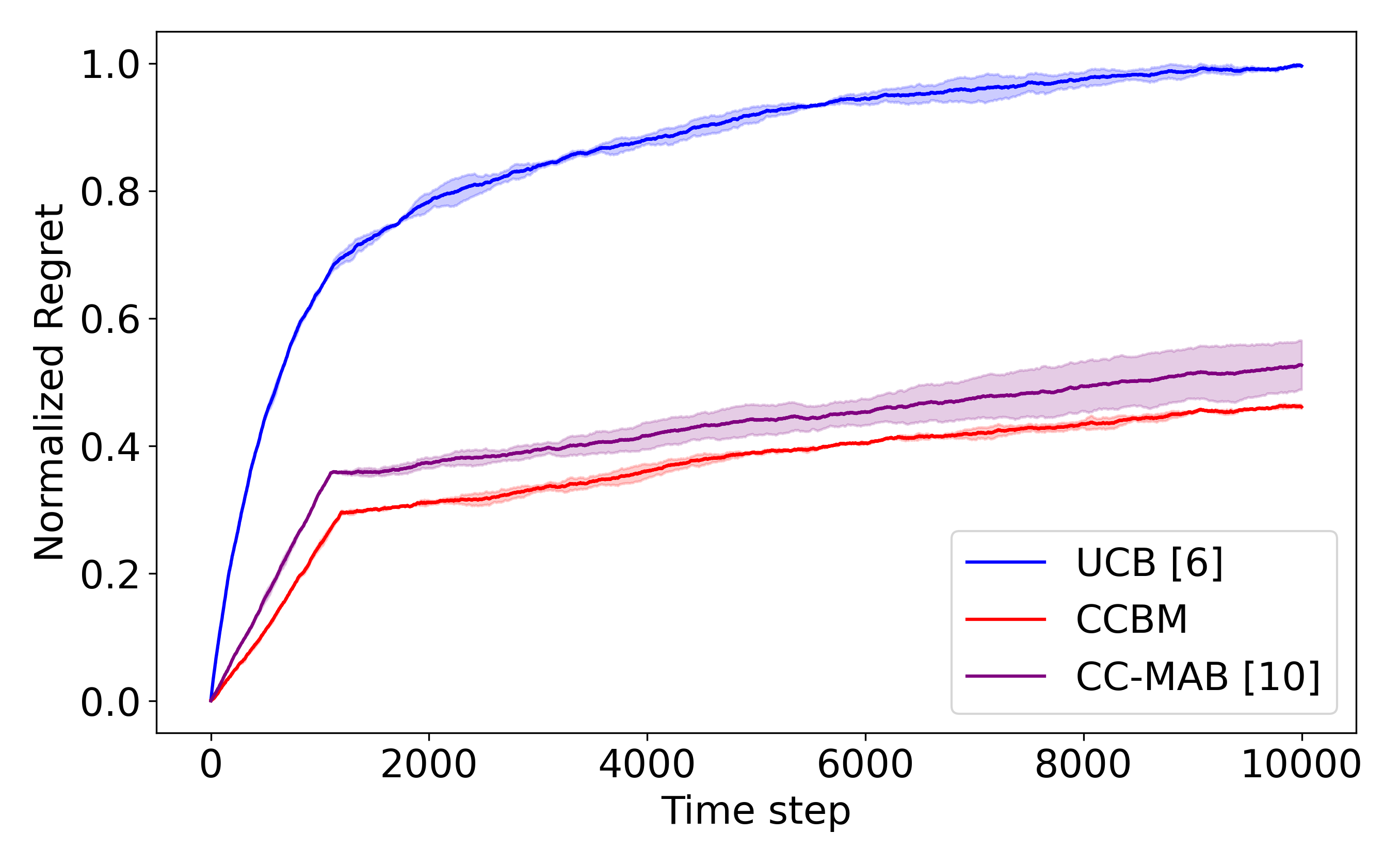}}
\caption{Comparison of regret among different schemes.}
\label{regret}
\end{figure}

To evaluate the disparity between the total reward achieved by a practical probing algorithm and the optimal reward attainable by consistently selecting the best beam, Fig.~\ref{regret} shows the cumulative regret over time for three distinct algorithms. {It is worth noting that cumulative regret is a metric to show the expected cumulative difference between the reward achieved by the optimal algorithm and the designed algorithm.} Obviously, our proposed CCBM exhibits a superior performance compared to the two baselines. Specifically, the UCB-based scheme exhibits the highest regret, consistently maintaining a curve above the others throughout the time horizon. This can be attributed to the fact that it does not account for the properties of a submodular reward function. Further, our CCBM scheme achieves a lower regret than CC-MAB, which demonstrates the effectiveness of incorporating exploitation into the exploration phase via our attention-based selection. Furthermore, a noticeable turning point occurs at time step around 1,600, corresponding to the implementation of our early stopping criterion that prevents extensively useless searches.

\subsection{Beam management overhead and network throughput}

\begin{figure}[htbp]
\centerline{\includegraphics[scale=0.25]{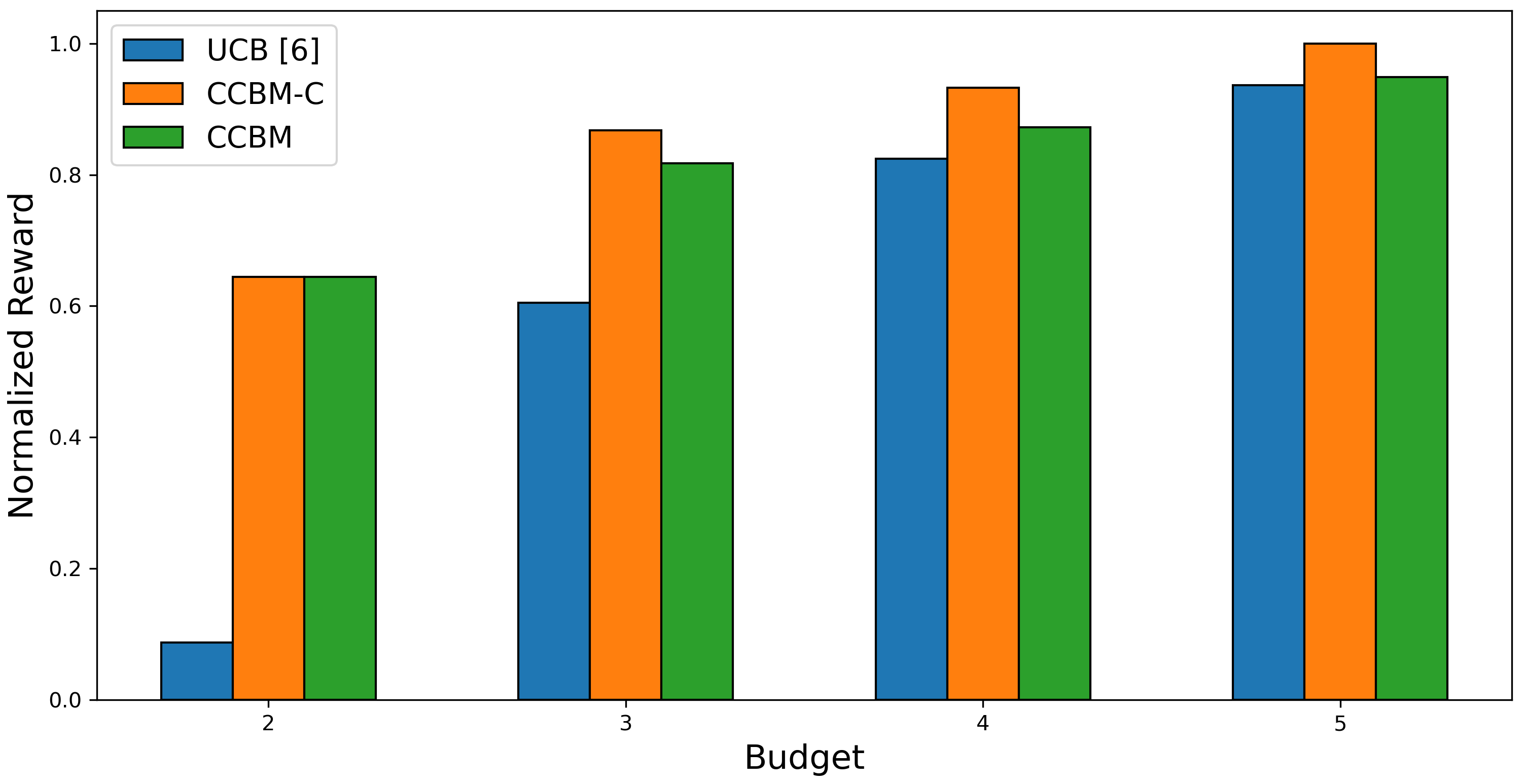}}
\caption{Reward under different beam probing budgets.}
\label{budget}
\end{figure}

In this section, we evaluate the performance in terms of beam management overhead and average user throughput in mmWave networks. First, we consider both CCBM and its variant, CCBM-C, based on the rewards obtained under different probing budgets $B$. The higher $B$ implies a potential manegement overhead. The only difference between the two schemes is that CCBM-C constantly probes beams with a budget of $B$ while CCBM searches a subset of beams with a size of $\frac{B}{2}$ in early stopping phase. As depicted in Fig.~\ref{budget}, an increase in the budget leads to an augmentation in rewards for all the three algorithms. This trend is attributed to the fact that a larger budget enhances the probability of encompassing the optimal beam, thereby increasing the likelihood of identifying the most advantageous beam. Under different budgets, the CCBM-C algorithm consistently secures the highest rewards. Concurrently, CCBM achieves rewards marginally lower than those of CCBM-C while utilizing only \textit{half} of the budget. This demonstrates how efficiently the CCBM algorithm can leverage limited resources to optimize rewards. In this regard, we conclude that both CCBM and CCBM-C can obtain a higher reward at lower overhead, indicating that CCBM is effective in achieving high performance with a constrained beam search budget.

\begin{figure}[htbp]
\centerline{\includegraphics[scale=0.28]{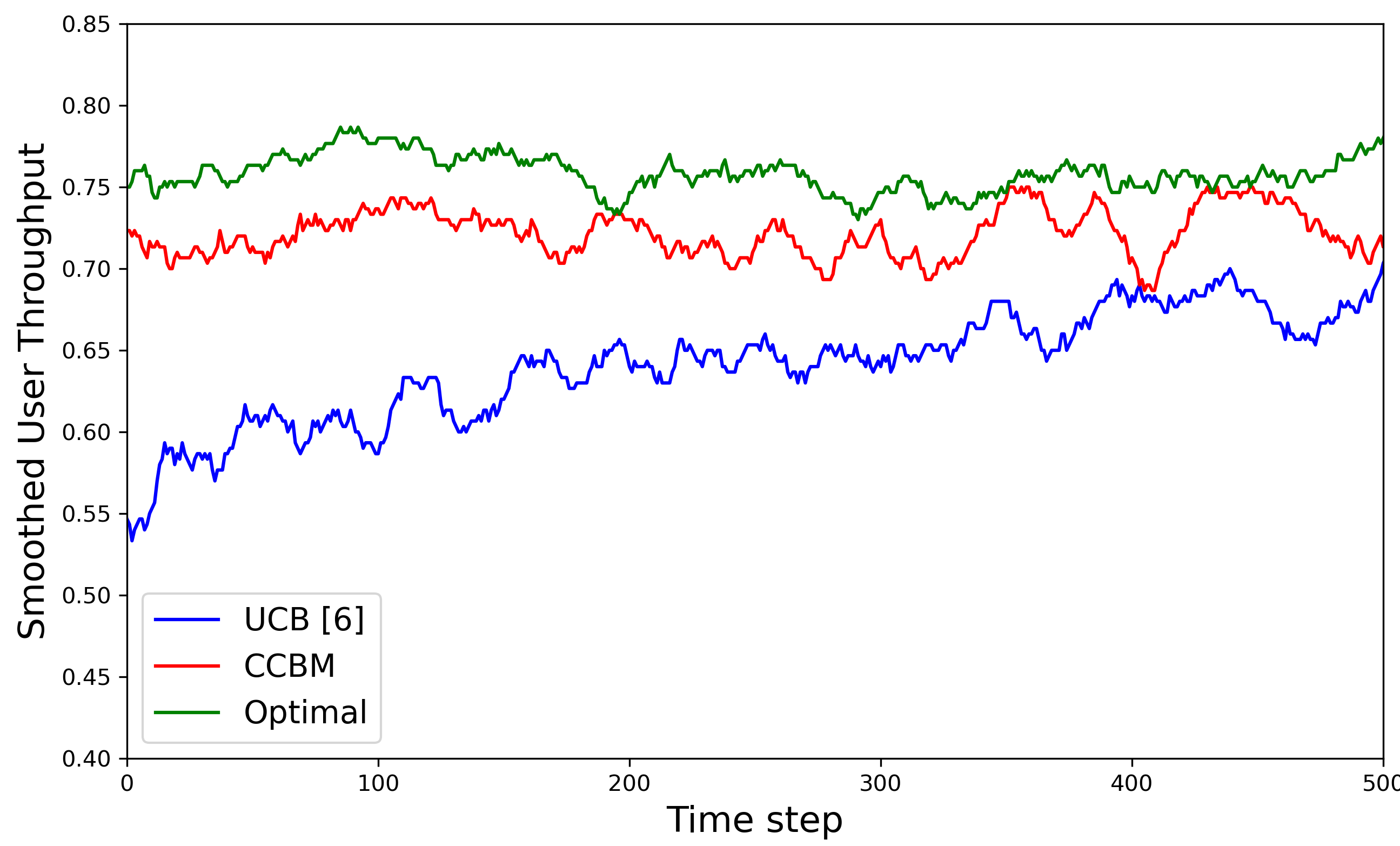}}
\caption{{Comparison of average user throughput among different schemes.} }
\label{throughput}
\end{figure}

Fig.~\ref{throughput} compares the average user throughput of our proposed CCBM against the UCB-based scheme from~\cite{10228988} and the Optimal method with an oracle search. {We apply a sliding window to smooth the evaluation results, making them more readable. The throughput performance is tested over 500 time steps, where at each time step, the user moves to another grid, and the approximate duration for each time step is around 30 ms.} We observe that the throughput of the Optimal scheme is always the highest, benefiting from its priori knowledge about the expected reward of each arm. Consistently, the throughput of CCBM is maintained at a relatively high level, close to the optimal results and significantly surpassing the results of the UCB-based scheme. Additionally, the throughput of CCBM exhibits a lower variance than that of UCB-based scheme, signifying its greater stability. 
Such consistently higher throughput performance underscores the robustness and efficiency of our CCBM scheme in dynamic network environments.
{It is worth noting that localization error is considered throughout the experiments. We simulate this error by introducing white noise that follows a Gaussian distribution. Additionally, the user's location is ultimately mapped into the grids with non-negligible ranges, which inherently provides a degree of error tolerance in our method.}  

Lastly, load balancing is another critical aspect addressed by our CCBM approach. To evaluate the load balancing performance, we utilize the maximum load utilization $L_{max}$ as the metric to qualitatively reflect network congestion~\cite{liu2020blockage}, where a higher $L_{max}$ indicates more server congestion and unbalanced resource usage. The value of $L_{max}$ corresponds to the maximum load among all beams in the network. Fig.~\ref{lmax} shows the average $L_{max}$ across randomly located users. As expected, the Optimal scheme achieves the lowest $L_{max}$ value, while our CCBM approach performs closely to the optimal results. It is observed that the gap is especially smaller under higher density of users in the network, which validates the load balancing capability owing to the strategical reward function.

\vspace{-0.1cm}
\begin{figure}[htbp]
\centerline{\includegraphics[scale=0.4]{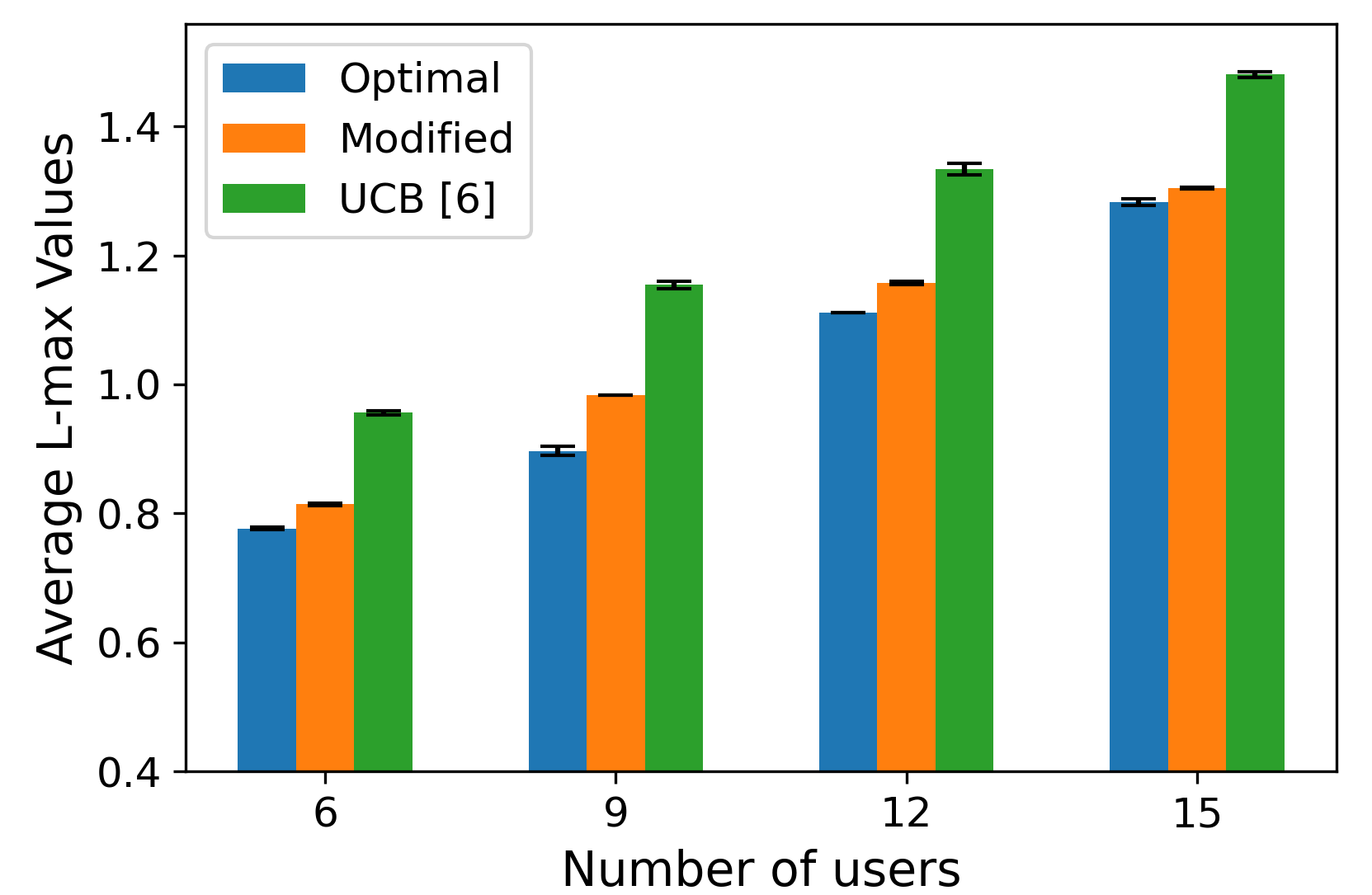}}

\caption{Comparison of $L_{max}$ among different schemes.}
\label{lmax}
\end{figure}
\vspace{-0.2cm}
\subsection{Balancing link-level and network-level performance}

As introduced in Sec. IV, we designate a penalty weight $K$ to denote the maximum number of users that can be connected to a single beam. This weight serves as a penalizing factor, guiding users to pair with an AP with lower load while maintaining relatively high link quality. Consequently, selecting an optimal value for $K$ is crucial to strike a balance between link quality for each pair of transceivers and the overall network load among all transceivers.
In essence, a smaller $K$ value restricts the number of users that can connect to the same beam of an AP, thereby reducing its traffic load. However, this limitation may also prevent a user from accessing an AP with the optimal beam for a superior link quality. Conversely, choosing a larger value for $K$ boosts the overall reward of the MAB algorithm, as one beam can serve more users simultaneously. Nevertheless, this approach may degrade the load balancing performance.

As shown in Fig.~\ref{reward_and_lmax}, we investigate the relationship between beam selection reward and $L_{max}$ across various penalty weight values. It is unsurprising that with increasing $K$ values, the reward rises accordingly as more users can connect to the same beam of an AP, albeit at the expense of the overall load on that AP. 
{Particularly, we observe that the optimal penalty weight is 9 in this evaluation, as evidenced by a sharp increase in reward when $K$ transitions from 2 to 9, accompanied by only a marginal increase in AP load. However, setting $K$ to excessively large values results in a sharp rise in $L_{max}$ as shown in Fig.}~\ref{reward_and_lmax}, {indicating a failure to balance the traffic load over the network. On the other hand, the increase in reward becomes quite marginal when increasing $K$ to a higher value. Therefore, an optimal penalty weight, approximately around 9, successfully strikes a balance between link-level performance and overall network load balancing.}

\begin{figure}[htbp]
\centerline{\includegraphics[scale=0.4]{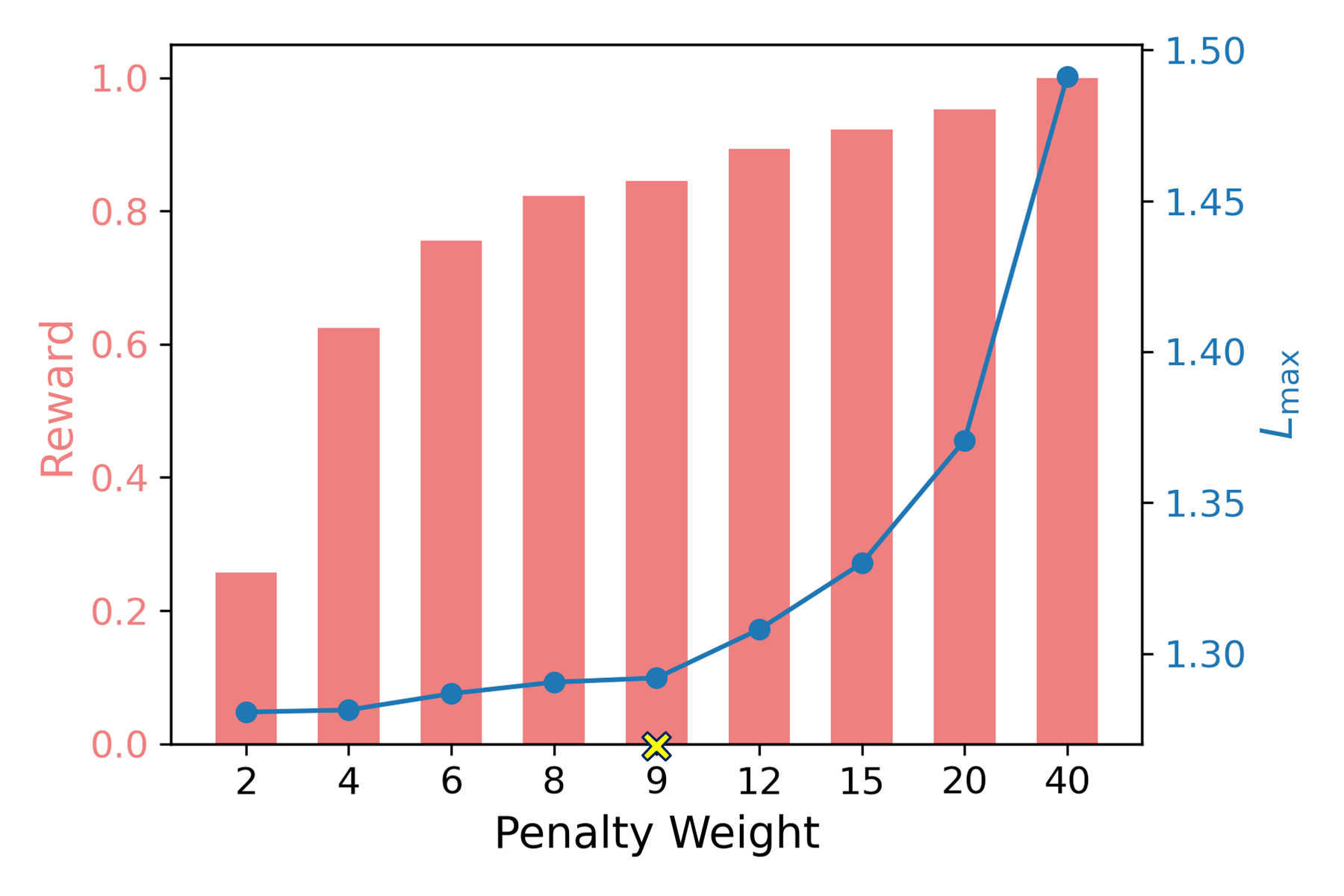}}
\caption{Beam selection reward and network load balancing vs. penalty weights.}
\label{reward_and_lmax}
\end{figure}
 \vspace{-0.5cm}

\section{Conclusions}
This paper presented a contextual combinatorial beam management scheme for joint transceiver paring and beam selection in mmWave wireless networks. Specifically, we explored an early stopping criterion and an attention-based mechanism to mitigate excessive beam search during the online probing phase. Theoretical analysis established its asymptotic optimality by setting an upper bound on cumulative regret of the MAB algorithm. Additionally, a carefully designed reward function accounted for load balancing among deployed APs within the network, facilitating the selection of a globally optimal AP and beam combination, thereby enhancing overall network performance across multiple dimensions. Through a comprehensive series of evaluations and theoretical analyses, our CCBM has demonstrated superiority over other baseline schemes in optimizing beam management in dense mmWave wireless networks, thus paving the way for the development of next-generation multiple access and advanced transceivers. 

\section*{Acknowledgment}
This research was supported by the National Science Foundation through Award CNS--2312138, CNS--2312139, CNS-2415208, CCSS-2434053 and ECCS 2434054.

\vspace{-0.0cm}
{\appendix
Here we derive a bound for \( E[R_n(T)] \).
For each time slot \( 1 \leq t \leq T \), the regret of selecting near-optimal subsets can be expressed as

\vspace{-0.3cm}
\begin{align*}
R_n(T) = \sum_{t=1}^{T} I_{\{W^t, S^t \in S_b \setminus \mathcal{L}^t(p^t)\}} \times \\
 \left( \left( 1 - \frac{1}{e} \right) \cdot R( S^{*, t}(x^t),r^t) - R(S^t,r^t) \right).
\end{align*}

\noindent Define \( Q_t = W^t \cap \{S^t \in S_b \backslash \mathcal{L}^t(p^t)\} \) to signify the event that a near-optimal arm set is selected. Then, we have

\vspace{-0.3cm}
\begin{align*}
E[R_n(T)] = \sum_{t=1}^{T} E[I_{\{Q_t\}} \times \\
\left( (1 - \frac{1}{e}) \cdot R(S^{*,t}(x^t),r^t) - R(S^t,r^t) \right)]. 
\end{align*}

By applying the principle of conditional expectation, this is equivalent to
\vspace{-0.3cm}
\begin{align*}
    &E[R_n(T)]= \sum_{t=1}^{T} \text{Prob}\{Q(t)\} \cdot \\
    &E\left[ \left(1 - \frac{1}{e}\right) \cdot R(S^{*,t}(x^t),r^t) - R(S^t,r^t) \middle| Q(t)\right]\\
    &\leq \sum_{t=1}^{T} E \left[ \left(1 - \frac{1}{e}\right) \cdot R( S^{*,t}(x^t),r^t) - R( S^t,r^t) \middle| Q(t)\right].
\end{align*}

When \( Q(t) \) holds true, which means that the algorithm enters an exploitation phase and define \( J \in S_b \backslash \mathcal{L}^t(p^t) \). By the definition of \( \mathcal{P}^{ue,t} \), it holds that \( C^t(p^t_{m}) > K(n_x) = n_x^{\frac{1}{2}} \log(n_x) \) for all \( p^t_m \in p^t \). Moreover, given that \( J \in S_b \backslash \mathcal{L}^t(p^t) \), it holds
\[
R(\tilde{S}^t(p^t),\underline{\mu}_p^t) - R(J,\bar{\mu}_p^t) < An_x^\theta 
\]
To establish an upper bound on the regret, we consider this expression
\[
\sum_{t=1}^{T} \mathbb{E} \left[ \left(1 - \frac{1}{e}\right) \cdot R( S^{*,t}(x^t),r^t) - R(J,r^t) \middle| Q(t) \right]
\]
\[
= \sum_{t=1}^{T} \left((1 - \frac{1}{e}) \cdot R( S^{*,t}(x^t),\mu^t) - R( J,\mu^t)\right). 
\]
Through application of the Lipschitz-continuous condition, we have:
\begin{align*}
&\left(1 - \frac{1}{e}\right) \cdot R(S^{*,t}(x^t),\mu^t) - R( J,\mu^t)\\
&\leq \left(1 - \frac{1}{e}\right) \cdot R(S^{*,t}(x^t),\tilde{\mu}_{p}^t) + BLh_T^{-1} - R(J,\mu_{x}^t)\\
&\leq \left(1 - \frac{1}{e}\right) \cdot R(\tilde{\mu}_{p}^t, S^{*,t}(p^t)) + BLh_T^{-1} - R(J,\mu_{x}^t)\\
&\leq R(\tilde{S^t(p^t),\mu}_{p}^t) + BLh_T^{-1} - R( J,\mu_{x}^t)\\
&\leq R(\tilde{S^t(p^t),\mu}_{p}^t) + 2BLh_T^{-1} - R( J,\mu_{x}^t)\\
&\leq R(\tilde{S^t(p^t),\mu}_{p}^t) + 3BLh_T^{-1} - R( J,\mu_{x}^t)\\
&\leq 3BLh_T^{-1} + An_x^\theta \leq 3BLh_T^{-1} + At^\theta,
\end{align*}
where the third inequality follows the definition of \( \tilde{S}^{*,t}(p^t) \) and \( \tilde{S}^t(p^t) \). Using \( h^{-1}_T = \lceil T^{\frac{1}{4}} \rceil^{-1} \leq T^{-\frac{1}{4}} \), we further have

\[
\mathbb{E} \left[R(S^{*,t}(x^t),r^t) - R(J,r^t) | Q(t)\right] \leq 3BLT^{-\frac{1}{4}} + At^\theta. 
\]

\noindent Therefore, the regret can be bounded by
\begin{align*}
\mathbb{E}\left[R_n(T)\right] \leq \sum_{t=1}^{T} \left(3BLT^{- \frac{1}{4}} + At^\theta\right)\\
\leq 3BLT^{\frac{3}{4}} + \frac{A}{1 + \theta}T^{1+\theta}. 
\end{align*}

}

\bibliographystyle{IEEEtran}
\bibliography{reference}

\vfill

\end{document}